\newtheorem{theorem}{Theorem}
\newtheorem{example}{Example}
\newtheorem{proposition}{Proposition}
\newtheorem{lemma}{Lemma}
\newtheorem{corollary}{Corollary}
\newtheorem{remark}{Remark}
\theoremstyle{definition}
\begin{document}
	\newgeometry{left=0.7in,right=0.7in,top=.5in,bottom=1in}
	\title{An Information Geometric Approach to Local Information Privacy with Applications to Max-lift and Local Differential Privacy}
\vspace{-4mm}
\author{
	\IEEEauthorblockN{Amirreza Zamani$^\dagger$, Parastoo Sadeghi$^\ddagger$, Mikael Skoglund$^\dagger$ \vspace*{0.5em}
		\IEEEauthorblockA{\\
			$^\dagger$Division of Information Science and Engineering, KTH Royal Institute of Technology \\
			$^\ddagger$School of Engineering and Technology, UNSW\\
			Email: \protect amizam@kth.se, p.sadeghi@unsw.edu.au, skoglund@kth.se }}
}
	\maketitle

\begin{abstract}
	We study an information-theoretic privacy mechanism design, where an agent observes useful data $Y$ and wants to reveal the information to a user. Since the useful data is correlated with the private data $X$, the agent uses a privacy mechanism to produce disclosed data $U$ that can be released. We assume that the agent observes $Y$ and has no direct access to $X$, i.e., the private data is hidden. We study the privacy mechanism design that maximizes the revealed information about $Y$ while satisfying a bounded Local Information Privacy (LIP) criterion.
	When the leakage is sufficiently small, concepts from information geometry allow us to locally approximate the mutual information. By utilizing this approximation the main privacy-utility trade-off problem can be rewritten as a quadratic optimization problem that has closed-form solution under some constraints. For the cases where the closed-form solution is not obtained we provide lower bounds on it. In contrast to the previous works that have complexity issues, here, we provide simple privacy designs with low complexity which are based on finding the maximum singular value and singular vector of a matrix. To do so, we follow two approaches where in the first one we find a lower bound on the main problem and then approximate it, however, in the second approach we approximate the main problem directly.
	
	 In this work, we present geometrical interpretations of the proposed methods and in a numerical example we compare our results considering both approaches with the optimal solution and the previous methods. Furthermore, we discuss how our method can be generalized considering larger amounts for the privacy leakage. Finally, we discuss how the proposed methods can be applied to deal with differential privacy.	 
\end{abstract}
\section{Introduction}
As shown in Fig.~\ref{fig:sysmodel1}, in this paper, an agent tries to reveal some useful information to a user. Random variable (RV) $Y$ denotes the useful data and is arbitrarily correlated with the private data denoted by RV $X$. Furthermore, RV $U$ describes the disclosed data. The agent wants to design $U$ based on $Y$ that reveals as much information as possible about $Y$ and satisfies a privacy criterion. We use mutual information to measure utility and Local Information Privacy (LIP) to measure the privacy leakage. In this work, some bounded privacy leakage is allowed, i.e., for all $x$ and $u$ we require $-\epsilon\leq \log(\frac{P_{X|U}(x|u)}{P_{X}(x)})\leq \epsilon$.

Related works on the statistical privacy mechanism design can be found in
\cite{khodam,Khodam22,zarab1,zarab2,zarab3,seif,duchi,kairouz,evfimievski,kairouz2015composition,barthe2013beyond,feldman2018privacy,koala,borz,shah, makhdoumi, dwork1,dwork2006our, shahab,sankar, Total, sankar2, deniz4, asoodeh3, Calmon1,  nekouei2, issa,oof,razegh,emma,Liu,multiAmir,sep,lopuha}. 
 

In \cite{borz}, the problem of privacy-utility trade-off considering mutual information both as measures of privacy and utility is studied. Under perfect privacy assumption, it has been shown that the privacy mechanism design problem can be reduced to linear programming. 
In \cite{khodam}, privacy mechanisms with a per letter privacy criterion considering an invertible leakage matrix have been designed allowing a bounded leakage. This result is generalized to a non-invertible leakage matrix in \cite{Khodam22}.
\begin{figure}[]
	\centering
	\includegraphics[width = 0.35\textwidth]{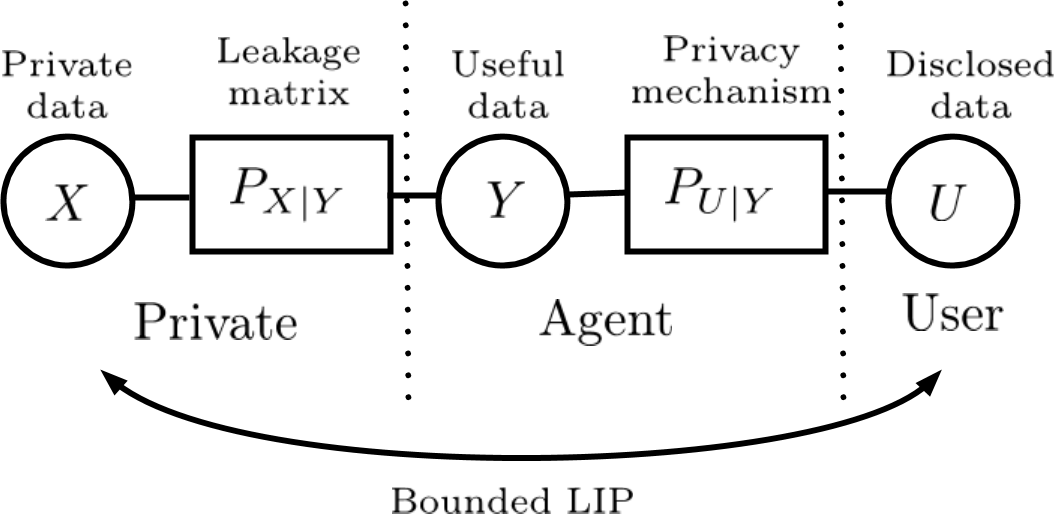}
	\caption{In this model, disclosed data $U$ is designed by a privacy mechanism that maximizes the information disclosed about $Y$ and satisfies the bounded LIP criterion. Here, we assume that the private data $X$ is not available directly to the agent.}
	\label{fig:sysmodel1}
\end{figure}
In \cite{koala}, \emph{secrecy by design} problem is studied under the perfect secrecy assumption. Bounds on secure decomposition have been derived using the Functional Representation Lemma. 
In \cite{shah}, the privacy problems considered in \cite{koala} are generalized by relaxing the perfect secrecy constraint and allowing some leakage. Furthermore, the bounds in \cite{shah} have been tightened in \cite{sep} by using a \emph{separation technique}. In \cite{Liu}, fundamental limits of private data disclosure are studied, where the goal is to minimize leakage under utility constraints with non-specific tasks. This result is generalized in \cite{multiAmir}.  
The concept of lift is studied
in \cite{zarab2} which represents the likelihood ratio between the posterior and prior beliefs
concerning sensitive features within a data set.
Concepts from information geometry have been used in \cite{khodam,Khodam22,shah,razegh}, and \cite{emma}, to approximate the main design problems and find simple privacy designs. Specifically, the \emph{strong $\chi^2$-privacy criterion} and the \emph{strong $\ell_1$-privacy criterion} are introduced in \cite{khodam} and \cite{Khodam22}. Both per-letter (point-wise) measures let us approximate the main privacy-utility trade-off problems and study them geometrically. Furthermore, in \cite{emma}, by using information geometry a local approximation of the secrecy capacity over a wire-tap channel has been obtained.  

In the context of privacy, numerous measures of privacy leakage have been used, e.g., mutual information \cite{borz,shah,koala,Liu}, differential privacy (DP) \cite{dwork11}, $(\epsilon,\delta)$-differential privacy \cite{dwork2006our}, local differential privacy (LDP) \cite{kairouz2015composition,cuff2016differential,barthe2013beyond,feldman2018privacy,duchi,kairouz,evfimievski,lopuha}, local information privacy (LIP) \cite{zarab3,Calmon1,seif,zarab1,lopuha}, maximal leakage \cite{issa}, lift \cite{zarab2}, average $\ell_1$-distance \cite{Total}, average $\chi^2$-divergence \cite{Calmon2}, point-wise $\ell_1$-distance \cite{Khodam22}, point-wise $\chi^2$-divergence \cite{khodam,razegh}.

Many information theory problems face challenges due to the lack of a geometric structure in the space of probability distributions. Assuming the distributions of interest are close, KL divergence as well as mutual information can be approximated by a weighted squared Euclidean distance. This leads to a method where we can approximate the main problems. This approach has been used in \cite{Shashi, huang}, considering point-to-point channels and some specific broadcast channels. As we outlined earlier, a similar approach has been used in the privacy context in \cite{khodam,Khodam22,shah,razegh}, and \cite{emma}.

In the present work, due to the bounded LIP criterion which is a point-wise (strong) measure, we utilize concepts from the information geometry similar as \cite{khodam}, to approximate the KL divergence and mutual information in case of a small leakage $\epsilon$. This allows us to transfer the main problem into an analytically simple quadratic linear algebra problem, which also provides deep intuitive understanding of the mechanism. To do so, we follow two approaches. In the first approach, we first find a lower bound on the main privacy-utility trade-off problem using existing inequalities on $\log(1+x)$, then we approximate it using KL-approximation. In the second approach, we directly find an approximation of the main problem which leads to a quadratic optimization (linear algebra) problem. We compare the obtained results with the optimal solution and previous results in \cite{khodam} in a numerical example. Furthermore, we present a geometrical interpretation of the proposed approach. Finally, we discuss how we can generalize the results for larger amount of privacy leakage and how the proposed approach can be applied for privacy mechanism designs considering LIP for the leakage constraints.  

\vspace{-1mm}
\section{System model and Problem Formulation} \label{sec:system}
Let $P_{XY}$ denote the joint distribution of discrete random variables $X$ and $Y$ defined on finite alphabets $\cal{X}$ and $\cal{Y}$ with equal cardinality, i.e, $|\cal{X}|=|\cal{Y}|=\mathcal{K}$.
We represent $P_{XY}$ by a matrix defined on $\mathbb{R}^{|\mathcal{K}|\times|\mathcal{K}|}$ and 
marginal distributions of $X$ and $Y$ by vectors $P_X$ and $P_Y$ defined on $\mathbb{R}^{|\mathcal{K}|}$ and $\mathbb{R}^{|\mathcal{K}|}$ given by the row and column sums of $P_{XY}$. 
We assume that each element in vectors $P_X$ and $P_Y$ is non-zero. Furthermore, 
we represent the leakage matrix $P_{X|Y}$ by a matrix defined on $\mathbb{R}^{|\mathcal{K}|\times|\mathcal{K}|}$, which is assumed to be invertible. Furthermore, for given $u\in \mathcal{U}$, $P_{X,U}(\cdot,u)$ and $P_{X|U}(\cdot|u)$ defined on $\mathbb{R}^{|\mathcal{X}|}$ are distribution vectors with elements $P_{X,U}(x,u)$ and $P_{X|U}(x|u)$ for all $x\in\cal X$ and $u\in \cal U$. 
The relation between $U$ and $Y$ is described by the kernel $P_{U|Y}$ defined on $\mathbb{R}^{|\mathcal{U}|\times|\mathcal{Y}|}$, furthermore, the relation between $U$ and the pair $(Y,X)$ is described by the kernel $P_{U|Y,X}$ defined on $\mathbb{R}^{|\mathcal{U}|\times|\mathcal{Y}|\times|\mathcal{X}|}$. In this work, $P_{X}(x)$, $P_{X}$ and $[P_{X}]$ denote $P_{X}(X=x)$, distribution vector of $X$ and a diagonal matrix with diagonal entries equal to $P_{X}(x)$, respectively. For two vectors $P$ and $Q$ with same size, we say $P\leq Q$ if $P(x)\leq Q(x)$ for all $x$.

Our goal is to design the privacy mechanism that produces the disclosed data $U$, which maximizes the utility and satisfies a privacy criterion.
In this work, the utility is measured by the mutual information $I(U;Y)$ and the privacy leakage by the LIP. Thus, the privacy problem can be stated as follows 
\begin{subequations}\label{problem}
	\begin{align}
	\sup_{P_{U|Y}} \ \ &I(U;Y),\label{problem1}\\
	\text{subject to:}\ \ &X-Y-U,\label{Markov1}\\
	 -\epsilon\leq& \log(\frac{P_{X|U}(x|u)}{P_{X}(x)})\leq \epsilon, \forall x,u.\label{local1}
	\end{align}
\end{subequations}
Intuitively, for small $\epsilon$, \eqref{local1} means that the two distributions (vectors) $P_{X|U=u}$ and $P_X$ are close to each other. This should hold for all $u\in\mathcal{U}$. Thus $X$ and $U$ are almost independent in the sense that $P_{X|U=u}$ almost does not depend on $U$.
The closeness of $P_{X|U=u}$ and $P_X$ allows us to locally approximate $I(U;Y)$ which leads to an approximation of \eqref{problem}. In the literature, The LIP constraint is based on $\log(\frac{P_{U|X}(u|x)}{P_{U}(u)})$, however, we can replace it by $\log(\frac{P_{X|U}(x|u)}{P_{X}(x)})$ since we have $\frac{P_{X|U}(x|u)}{P_{X}(x)}=\frac{P_{U|X}(u|x)}{P_{U}(u)}$. Furthermore, LIP is a well-known measure, and its relation with other measures can be found in the literature. For instance, \cite[Lemma 1]{lopuha} finds the relations between LIP, LDP, and mutual information. Additionally, removing the left inequality in \eqref{problem} results in the \emph{max-lift} privacy leakage measure \cite{zarab1, zarab2}. The relations between max-lift, the \emph{strong $\ell_1$-privacy criterion} and the \emph{strong $\chi^2$-privacy criterion} are studied in \cite{zarab1}.
\begin{remark}
	In this work, we assume that $P_{X|Y}$ is invertible; however, this assumption can be generalized using the techniques as in \cite{Khodam22}. Here, we focus on the invertible case for $P_{X|Y}$ and we discuss how to extend it in Section \ref{dis2}.
\end{remark}
\begin{remark}
	To solve \eqref{problem}, a linear program is proposed in \cite{lopuha} but with complexity issues as the size of $X$ and $Y$ grow. The linear program is based on finding extreme points of a set and trying all possible candidates in a two-step optimization problem which leads to an exponential computational complexity. In contrast to \cite{lopuha}, here, we propose a method that find lower bounds and approximations which are based on finding the maximum singular value and vector of a matrix and lead to simple privacy designs with intuitive geometrical interpretations. 
\end{remark}
\section{preliminaries and background}
In this section, we present the method used in \cite{Shashi,huang} and \cite{khodam} which allows us to approximate the mutual information. Using \eqref{local1}, we can rewrite the conditional distribution $P_{X|U=u}$ as a perturbation of $P_X$. Thus, for any $u\in\mathcal{U}$, we can write $P_{X|U=u}=P_X+\epsilon\cdot J_u$, where $J_u\in\mathbb{R}^\mathcal{K}$ is a perturbation vector that has the following three properties
\begin{align}
\sum_{x\in\mathcal{X}} J_u(x)&=0,\ \forall u,\label{prop1}\\
\sum_{u\in\mathcal{U}} P_U(u)J_u(x)&=0,\ \forall x\label{prop2},\\
\left(\frac{e^{-\epsilon}-1}{\epsilon}\right)P_{X}(x)\leq J_{u}(x)&\leq \left(\frac{e^{\epsilon}-1}{\epsilon}\right)P_{X}(x), \forall u, \ \forall x\label{prop3}.
\end{align}
The first two properties ensure that $P_{X|U=u}$ is a valid probability distribution \cite{khodam,Khodam22}, and the third property follows from \eqref{local1}.

Next, we recall a result from \cite{khodam}. To do so, let $[\sqrt{P_Y}^{-1}]$ and $[\sqrt{P_X}]$ be diagonal matrices with diagonal entries $\{\sqrt{P_Y}^{-1},\ \forall y\in\mathcal{Y}\}$ and $\{\sqrt{P_X},\ \forall x\in\mathcal{X}\}$. Furthermore, let $L_u\triangleq[\sqrt{P_X}^{-1}]J_u\in\mathbb{R}^{\mathcal{K}}$ and $W\triangleq [\sqrt{P_Y}^{-1}]P_{X|Y}^{-1}[\sqrt {P_X}]$. Finally, we use the Bachmann-Landau notation where $o(\epsilon)$ describes the asymptotic behaviour of a function $f:\mathbb{R}^+\rightarrow\mathbb{R}$ which satisfies $\frac{f(\epsilon)}{\epsilon}\rightarrow 0$ as $\epsilon\rightarrow 0$.
\begin{proposition}\cite[Proposition 3]{khodam}
	For a small $\epsilon$, $I(U;Y)$ can be approximated as follows
	\begin{align}
	I(Y;U)&=\frac{1}{2}\epsilon^2\sum_u P_U\|WL_u\|^2+o(\epsilon^2)\label{approx1}\\&\cong\frac{1}{2}\epsilon^2\sum_u P_U\|WL_u\|^2,\label{approx2}
	\end{align}
	where $\|\cdot\|$ corresponds to the Euclidean norm ($\ell_2$-norm).
\end{proposition}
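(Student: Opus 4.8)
The plan is to transport the perturbation of $P_{X|U=u}$ around $P_X$ into a perturbation of $P_{Y|U=u}$ around $P_Y$, and then invoke the standard second-order (information-geometric) expansion of KL divergence. First I would write $I(U;Y)=\sum_{u}P_U(u)\,D(P_{Y|U=u}\,\|\,P_Y)$ and, mirroring the construction for $J_u$, introduce for each $u$ a perturbation vector $K_u\in\mathbb{R}^{\mathcal{K}}$ through $P_{Y|U=u}=P_Y+\epsilon K_u$; note that $\sum_y K_u(y)=0$ because both $P_{Y|U=u}$ and $P_Y$ are probability vectors. The Markov chain $X-Y-U$ in \eqref{Markov1} yields $P_{X|U=u}=P_{X|Y}\,P_{Y|U=u}$, where $P_{X|Y}$ is read as the (column-stochastic) matrix with entries $P_{X|Y}(x|y)$; substituting the two perturbations and using $P_{X|Y}P_Y=P_X$ gives $\epsilon J_u=\epsilon\,P_{X|Y}K_u$, so $K_u=P_{X|Y}^{-1}J_u$ by the assumed invertibility of $P_{X|Y}$.

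Next I would apply the local expansion of KL divergence: for $Q=P+\epsilon A$ with $\sum_i A(i)=0$ and $P$ of full support, Taylor-expanding $\log(1+t)$ inside $D(Q\|P)=\sum_i (P(i)+\epsilon A(i))\log(1+\epsilon A(i)/P(i))$ gives $D(Q\|P)=\tfrac12\epsilon^2\sum_i A(i)^2/P(i)+o(\epsilon^2)=\tfrac12\epsilon^2\|[\sqrt{P}^{-1}]A\|^2+o(\epsilon^2)$, i.e.\ KL is locally one half of the $\chi^2$-divergence. With $P=P_Y$ and $A=K_u$ this becomes $D(P_{Y|U=u}\,\|\,P_Y)=\tfrac12\epsilon^2\|[\sqrt{P_Y}^{-1}]K_u\|^2+o(\epsilon^2)$, and the definitions $L_u=[\sqrt{P_X}^{-1}]J_u$ and $W=[\sqrt{P_Y}^{-1}]P_{X|Y}^{-1}[\sqrt{P_X}]$ let me rewrite $[\sqrt{P_Y}^{-1}]K_u=[\sqrt{P_Y}^{-1}]P_{X|Y}^{-1}[\sqrt{P_X}][\sqrt{P_X}^{-1}]J_u=WL_u$. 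Averaging over $u$ against $P_U$ then produces \eqref{approx1}, and \eqref{approx2} is obtained by discarding the $o(\epsilon^2)$ term.

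The step I expect to be the real obstacle is justifying that the $o(\epsilon^2)$ remainder is \emph{uniform} in $u\in\mathcal{U}$, which is what licenses $\sum_u P_U(u)\,o(\epsilon^2)=o(\epsilon^2)$. This is where property \eqref{prop3} is essential: it forces $|J_u(x)|\le\max\{(1-e^{-\epsilon})/\epsilon,\,(e^{\epsilon}-1)/\epsilon\}\,P_X(x)$, hence $\|J_u\|$ — and therefore $\|K_u\|\le\|P_{X|Y}^{-1}\|\,\|J_u\|$ — is bounded by a constant depending only on $P_X$ and $P_{X|Y}$, uniformly over $u$ and over small $\epsilon$. Combined with the full support of $P_Y$, the cubic Taylor remainder of $D(P_Y+\epsilon K_u\,\|\,P_Y)$ is then $O(\epsilon^3)$ with a constant independent of $u$, and one should also check that for $\epsilon$ small enough $P_Y+\epsilon K_u$ remains in the interior of the simplex so the expansion is legitimate. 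Everything else is routine linear algebra.
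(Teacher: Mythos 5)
Your proposal is correct and follows essentially the same route as the paper: the Markov chain and invertibility of $P_{X|Y}$ give $P_{Y|U=u}=P_Y+\epsilon P_{X|Y}^{-1}J_u$, and the second-order expansion of the KL divergence (KL $\approx$ half the $\chi^2$-divergence) yields $\tfrac12\epsilon^2\|WL_u\|^2$ per letter, exactly as in the proof of Proposition~\ref{pos3} and in \cite{khodam}. Your added attention to the uniformity of the $o(\epsilon^2)$ remainder over $u$ and to $P_Y+\epsilon K_u$ staying inside the simplex is a point the paper handles only implicitly, via the bound \eqref{prop3} on $J_u$ and the sufficient conditions $\epsilon<\max\{c_1,c_2\}$ introduced later.
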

\begin{proof}
	The proof is based on local approximation of the KL-divergence and can be found in \cite{khodam}.
\end{proof}
The next result recalls a property of matrix $W$.
\begin{proposition}\cite[Appendix C]{khodam}
	The smallest singular value of $W$ is $1$ with corresponding singular vector $\sqrt{P_{X}}$.
\end{proposition}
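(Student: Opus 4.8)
The plan is to recognize $W^{-1}$ as the (invertible) \emph{divergence transition matrix} of the pair $(X,Y)$ and then pass to reciprocals of singular values. Since $P_{X|Y}$ is invertible, so is $W$, and
\[
W^{-1}=[\sqrt{P_X}^{-1}]\,P_{X|Y}\,[\sqrt{P_Y}]=:B,\qquad B_{x,y}=\frac{P_{XY}(x,y)}{\sqrt{P_X(x)\,P_Y(y)}}.
\]
The singular values of $W$ are exactly the reciprocals of those of $B$, and the right singular vectors of $W$ coincide with the left singular vectors of $B$. Hence it suffices to show that the largest singular value of $B$ equals $1$ and that the associated left singular vector is $\sqrt{P_X}$.

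First I would exhibit $1$ as a singular value of $B$. From $P_{X|Y}P_Y=P_X$ we get $P_{X|Y}^{-1}P_X=P_Y$, so
\[
W\sqrt{P_X}=[\sqrt{P_Y}^{-1}]\,P_{X|Y}^{-1}\,[\sqrt{P_X}]\sqrt{P_X}=[\sqrt{P_Y}^{-1}]\,P_{X|Y}^{-1}P_X=[\sqrt{P_Y}^{-1}]P_Y=\sqrt{P_Y},
\]
equivalently $B\sqrt{P_Y}=\sqrt{P_X}$. Since $\|\sqrt{P_X}\|^2=\sum_x P_X(x)=1$ and $\|\sqrt{P_Y}\|=1$ likewise, this already shows that $\sqrt{P_X}$ and $\sqrt{P_Y}$ are, respectively, a left and a right singular vector of $B$ with singular value $1$. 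It remains to prove that no singular value of $B$ exceeds $1$, i.e.\ that $B$ has operator norm at most $1$.

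This contraction bound is the only substantive step. A direct route: for any $v\in\mathbb{R}^{\mathcal{K}}$ (indexed by $\mathcal{Y}$),
\[
(Bv)(x)=\frac{1}{\sqrt{P_X(x)}}\sum_y P_{X|Y}(x|y)\sqrt{P_Y(y)}\,v(y)=\frac{1}{\sqrt{P_X(x)}}\sum_y \sqrt{P_{XY}(x,y)}\,\bigl(\sqrt{P_{X|Y}(x|y)}\,v(y)\bigr),
\]
so Cauchy--Schwarz gives $(Bv)(x)^2\le \frac{1}{P_X(x)}\bigl(\sum_y P_{XY}(x,y)\bigr)\bigl(\sum_y P_{X|Y}(x|y)v(y)^2\bigr)=\sum_y P_{X|Y}(x|y)v(y)^2$, and summing over $x$ yields $\|Bv\|^2\le\sum_y v(y)^2\sum_x P_{X|Y}(x|y)=\|v\|^2$. (Conceptually, in the norm-preserving coordinates $f(y)\mapsto\sqrt{P_Y(y)}f(y)$ and $g(x)\mapsto\sqrt{P_X(x)}g(x)$, the matrix $B$ represents the conditional-expectation operator $f\mapsto\mathbb{E}[f(Y)\mid X=\cdot]$, which is an $L^2$-contraction by Jensen's inequality.) Combining the pieces: all singular values of $B$ lie in $(0,1]$ with maximum $1$ attained at the left singular vector $\sqrt{P_X}$; therefore all singular values of $W=B^{-1}$ lie in $[1,\infty)$ with minimum $1$, and the corresponding right singular vector of $W$ is $\sqrt{P_X}$, consistently with $W\sqrt{P_X}=\sqrt{P_Y}$ and $\|\sqrt{P_X}\|=\|\sqrt{P_Y}\|=1$. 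The main obstacle is precisely the operator-norm bound $\|B\|\le1$; everything else is routine bookkeeping with the identity $P_{X|Y}P_Y=P_X$ and the normalization of $P_X,P_Y$.
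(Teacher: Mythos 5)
Your proof is correct. Note that the paper itself gives no argument for this proposition --- it simply cites \cite[Appendix~C]{khodam} --- so there is no in-paper proof to compare against; your route (identify $W^{-1}$ with the divergence transition matrix $B_{x,y}=P_{XY}(x,y)/\sqrt{P_X(x)P_Y(y)}$, verify $B\sqrt{P_Y}=\sqrt{P_X}$, establish $\|B\|_{\mathrm{op}}\le 1$ by Cauchy--Schwarz, and pass to reciprocal singular values) is the standard information-geometric argument and is what the cited appendix establishes. One small presentational point: $B\sqrt{P_Y}=\sqrt{P_X}$ with both vectors of unit norm does not by itself make them a singular pair (a unit vector mapped to a unit vector need not be a singular vector), so the sentence ``this already shows\dots'' is premature where it appears --- the conclusion only follows after the operator-norm bound $\|B\|\le 1$ (or, alternatively, after also checking $B^{T}\sqrt{P_X}=\sqrt{P_Y}$), but since you do supply that bound the argument as a whole is complete.
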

Finally, using \cite{khodam}, we recall that \eqref{prop1} can be rewritten as the constraint where vectors $\sqrt{P_{X}}$ and $L_u$ are orthogonal, i.e., 
\begin{align}\label{c1}
L_u \perp \sqrt{P_{X}},
\end{align}
\eqref{prop2} can be replaced by 
\begin{align}\label{c2}
\sum_{u\in\mathcal{U}} P_U(u)L_u=\bm 0\in\mathbb{R}^{\mathcal{K}},
\end{align}
and by using $L_u=[\sqrt{P_X}^{-1}]J_u$, \eqref{prop3} can be rewritten as
\begin{align}\label{c3}
\left(\frac{e^{-\epsilon}-1}{\epsilon}\right)\sqrt{P_{X}}\leq L_u \leq \left(\frac{e^{\epsilon}-1}{\epsilon}\right)\sqrt{P_{X}},\ \forall x,u.
\end{align}
 \section{Main Results}\label{sec:resul}
 \vspace{-1mm}
In this part, we derive lower bounds and approximations of \eqref{problem}. To do so, we follow two approaches. In the first approach, we find a lower bound on \eqref{problem} using lower and upper bounds on $\log(1+x)$ and then approximate it. In the second approach, we directly approximate \eqref{problem}.
\subsection{First Approach: Lower Bounds on \eqref{problem}}
In this part, we find lower bounds on \eqref{problem}. To do so, we use lower and upper bounds on $\log(1+x)$ which implies a stronger privacy criterion compared to the LIP in \eqref{local1}. 
\begin{lemma}\label{lem1}
	For all $\epsilon<1$, let $J_u$ satisfy \eqref{prop1}, and \eqref{prop2}, and 
	\begin{align}\label{jadid}
	\frac{-P_X(x)}{1+\epsilon}\leq J_u(x)\leq P_X(x), \ \forall x.
	\end{align}
	Then, \eqref{jadid} implies LIP in \eqref{local1}.
\end{lemma}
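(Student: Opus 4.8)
The plan is to substitute $P_{X|U=u} = P_X + \epsilon J_u$ directly into the logarithm in \eqref{local1} and bound the result with elementary inequalities for $\log(1+t)$. Fix $x$ and $u$ and write
\[
\log\!\left(\frac{P_{X|U}(x|u)}{P_X(x)}\right) = \log\!\left(1 + \epsilon\,\frac{J_u(x)}{P_X(x)}\right),
\]
so that, setting $r \triangleq J_u(x)/P_X(x)$, hypothesis \eqref{jadid} is exactly $-\tfrac{1}{1+\epsilon} \le r \le 1$, i.e. $-\tfrac{\epsilon}{1+\epsilon} \le \epsilon r \le \epsilon$. Since $P_X(x)>0$ by assumption the division is legitimate, and since $\epsilon r \ge -\tfrac{\epsilon}{1+\epsilon} > -1$ the argument of the logarithm is strictly positive.

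For the upper inequality in \eqref{local1} I would invoke $\log(1+t)\le t$, valid for all $t>-1$, which gives $\log(1+\epsilon r) \le \epsilon r \le \epsilon$. For the lower inequality, the bound $\epsilon r \ge -\tfrac{\epsilon}{1+\epsilon}$ yields $1+\epsilon r \ge \tfrac{1}{1+\epsilon}$, hence $\log(1+\epsilon r) \ge -\log(1+\epsilon)$; applying $\log(1+t)\le t$ once more with $t=\epsilon$ gives $-\log(1+\epsilon)\ge -\epsilon$, so $\log(1+\epsilon r) \ge -\epsilon$. Combining the two bounds establishes \eqref{local1} for every $x,u$, which is the claim. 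Note that properties \eqref{prop1} and \eqref{prop2} are only invoked insofar as they guarantee (together with \eqref{jadid}) that $P_{X|U=u}$ is a valid distribution; they play no role in the logarithmic estimate itself.

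Since everything reduces to the scalar facts $\log(1+t)\le t$ and $t>-1 \Rightarrow 1+t>0$, there is no genuine obstacle; the only points needing care are (i) checking that \eqref{jadid}, after dividing through by $P_X(x)$, becomes the two-sided bound on $\epsilon r$ stated above, and (ii) tracking signs when passing the logarithm through $1+\epsilon r \ge \tfrac{1}{1+\epsilon}$. The hypothesis $\epsilon<1$ is in fact not required for this argument; alternatively one may use the sharper two-sided estimate $\tfrac{t}{1+t}\le \log(1+t)\le t$ to reach the same conclusion. As a consistency remark I would also observe that \eqref{jadid} is strictly stronger than the exact LIP-equivalent box constraint \eqref{prop3}, because $\tfrac{e^{-\epsilon}-1}{\epsilon}\le -\tfrac{1}{1+\epsilon}$ and $1\le \tfrac{e^{\epsilon}-1}{\epsilon}$ both follow from $1+\epsilon\le e^{\epsilon}$; this is consistent with the stated purpose of the lemma, namely to impose a stronger criterion that is more convenient for the subsequent KL-approximation.
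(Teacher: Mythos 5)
Your proof is correct and takes essentially the same approach as the paper's: both directions come down to the elementary estimate $\log(1+t)\le t$ applied to the perturbation ratio $\epsilon J_u(x)/P_X(x)$, with \eqref{prop1}--\eqref{prop2} used only to guarantee that $P_X+\epsilon J_u$ is a valid distribution. The only cosmetic difference is in the lower bound, where the paper applies $\log(1+t)\ge \tfrac{t}{1+t}$ directly at $t=\epsilon J_u(x)/P_X(x)$ whereas you first bound the argument below by $\tfrac{1}{1+\epsilon}$ and then use $\log(1+\epsilon)\le\epsilon$ --- an equivalent rearrangement of the same inequality.
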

\begin{proof}
	We have
	\begin{align*}
	\log(\frac{P_{X|U}(x|u)}{P_{X}(x)})&=\log(1+\epsilon\frac{J_u(x)}{P_X(x)})\\
	&\stackrel{(a)}{\leq} \epsilon\frac{J_u(x)}{P_X(x)}\\&\stackrel{(b)}{\leq} \epsilon,
	\end{align*}
	where (a) follows by $\log(1+x)\leq x$ for all $x>-1$, and (b) follows by the right hand side in \eqref{jadid}. Furthermore, we have
	\begin{align*}
	\log(\frac{P_{X|U}(x|u)}{P_{X}(x)})&=\log(1+\epsilon\frac{J_u(x)}{P_X(x)})\\
	&\stackrel{(a)}{\geq} \epsilon\frac{J_u(x)}{P_X(x)+\epsilon J_u(x)}\\& \stackrel{(b)}{\geq} -\epsilon,
	\end{align*}
	where (a) follows by $\log(1+x)\geq \frac{x}{x+1}$ for all $x>-1$, and (b) follows by the left hand side in \eqref{jadid}. We emphasize that in (b) we used the fact that $P_X(x)+\epsilon J_u(x)\geq 0$ since $P_{X|U}(x|u)\geq 0$.
\end{proof}
In the next result we present a lower bound on \eqref{problem}.
\begin{corollary}\label{cor1}
	For all $\epsilon<1$, we have
	\begin{align}\label{app1lower}
\max_{\begin{array}{c} 
	\substack{P_{U|Y}: X-Y-U,\\ -\epsilon\leq \log(\frac{P_{X|U}(x|u)}{P_{X}(x)})\leq \epsilon, \forall x,u}
	\end{array}} \!\!\!\!\!\!\!\!\!\!\!\!\!\!\!\!I(Y;U)\geq \max_{\begin{array}{c} 
		\substack{J_u,\ P_U: X-Y-U,\\ J_u\ \text{satisfies}\ \eqref{prop1},\ \eqref{prop2},\ \text{and}\ \eqref{jadid}}
		\end{array}} \!\!\!\!\!\!\!\!\!\!\!\!\!\!\!\!I(Y;U).
	\end{align} 
\end{corollary}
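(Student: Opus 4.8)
The plan is to prove \eqref{app1lower} by a feasible-set inclusion argument, observing that the two sides are maxima of the \emph{same} objective $I(Y;U)$ subject to the \emph{same} Markov constraint $X-Y-U$; only the privacy requirement differs, being the LIP bound \eqref{local1} on the left and the conditions \eqref{prop1}, \eqref{prop2}, \eqref{jadid} on the perturbation vector $J_u$ on the right. Hence it suffices to show that every mechanism admissible for the right-hand problem is also admissible for the left-hand problem.

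First I would make the dictionary between the two parametrizations explicit. Given any kernel $P_{U|Y}$ with $X-Y-U$, for each $u$ with $P_U(u)>0$ set $J_u\triangleq\tfrac1\epsilon\big(P_{X|U=u}-P_X\big)$. Since $P_{X|U=u}$ and $P_X$ are probability vectors, $J_u$ automatically obeys \eqref{prop1}; and since $\sum_u P_U(u)P_{X|U=u}=P_X$, it automatically obeys \eqref{prop2}. Thus the pair $(J_u,P_U)$ induced by \emph{any} valid mechanism is already feasible for those two constraints, and, using invertibility of $P_{X|Y}$ to recover $P_{Y|U=u}=P_{X|Y}^{-1}(P_X+\epsilon J_u)$ as in \cite{khodam}, the quantity $I(Y;U)$ maximized on the right is exactly the mutual information induced by this mechanism. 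Conversely, the constraint ``$X-Y-U$'' appearing on the right-hand side of \eqref{app1lower} means precisely that the pair $(J_u,P_U)$ is realized by some such mechanism.

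Next I would invoke Lemma \ref{lem1}: for $\epsilon<1$, if in addition $J_u$ satisfies \eqref{jadid}, then the LIP bound \eqref{local1} holds for all $x,u$. Consequently, any mechanism feasible for the right-hand problem of \eqref{app1lower}, i.e.\ whose induced perturbation vectors satisfy \eqref{prop1}, \eqref{prop2}, and \eqref{jadid}, also satisfies \eqref{Markov1}--\eqref{local1}, hence is feasible for the left-hand (original) problem \eqref{problem}. Maximizing the same objective $I(Y;U)$ over the larger feasible set on the left can only increase the value, which is exactly the claimed inequality.

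There is essentially no hard step here: the corollary is an immediate consequence of the implication \eqref{jadid} $\Rightarrow$ \eqref{local1} established in Lemma \ref{lem1}, together with the observation that \eqref{prop1} and \eqref{prop2} are automatically met by the perturbation vector of any genuine mechanism. The only point deserving a line of care is the bookkeeping in the second paragraph — that optimizing over $P_{U|Y}$ and optimizing over admissible $(J_u,P_U)$ pairs are two descriptions of the same set of mechanisms with the same objective — but this is routine and already implicit in the setup of \cite{khodam}.
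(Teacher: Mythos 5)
Your proposal is correct and matches the paper's own argument: the paper likewise proves the inequality by noting that Lemma~\ref{lem1} gives the implication \eqref{jadid} $\Rightarrow$ \eqref{local1}, so every $(J_u,P_U)$ feasible on the right induces a mechanism feasible for \eqref{problem}, with $\epsilon<1$, \eqref{prop1} and \eqref{prop2} guaranteeing that $P_X+\epsilon J_u$ is a valid distribution. Your additional bookkeeping on the correspondence between kernels $P_{U|Y}$ and pairs $(J_u,P_U)$ only makes explicit what the paper leaves implicit.
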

\begin{proof}
	The proof follows by Lemma \ref{lem1}, since, \eqref{jadid} leads to the bounded LIP constraint. Moreover, $\epsilon<1$, \eqref{prop1} and \eqref{prop2} ensures that $P_X+\epsilon J_u=P_{X|U}$ is a distribution vector.
\end{proof}
\begin{remark}
	As $\epsilon$ decreases, the lower bound in \eqref{app1lower} becomes tighter. This follows since the upper and lower bounds on $\log(1+x)$ are obtained by using Taylor expansion and as $x$ decreases the error term becomes smaller.
\end{remark}
Next, we approximate the lower bound in \eqref{app1lower} using \eqref{approx1} and \eqref{approx2}. In the next result, $\alpha_{ij}$ corresponds to the $(i,j)$-th element of the matrix $P_{X|Y}^{-1}$. 
\begin{proposition}\label{pos3}
	For all $\epsilon<\max\{c_1,c_2\}$ and invertible leakage matrix $P_{X|Y}$, we have 
	\begin{align*}
	\max_{\begin{array}{c} 
		\substack{J_u,P_U: X-Y-U,\\ J_u\ \text{satisfies}\ \eqref{prop1},\ \eqref{prop2},\ \text{and}\ \eqref{jadid}}
		\end{array}} \!\!\!\!\!\!\!\!\!\!\!\!\!\!\!\!I(Y;U)=P_1+o(\epsilon^2)\cong P_1.
	\end{align*}
	where 
	\begin{align}\label{lower}
	P_1\triangleq\max_{\begin{array}{c} 
		\substack{L_u,P_U: \frac{-\sqrt{P_X}}{1+\epsilon}\leq L_u\leq \sqrt{P_X},\forall u ,\\ L_u\ \text{and}\ P_U\ \text{satisfy}\ \eqref{c1},\ \text{and}\ \eqref{c2}}
		\end{array}} \!\!\!\!\!\!\!\!\!\!\!\!\!\!\!\!\!\!\!0.5\epsilon^2\!\!\left(\sum_u \!\!P_U\|WL_u\|^2 \right),
	\end{align}
	and $c_1 \triangleq \frac{\min_y P_Y(y)}{\max_i \left(\sum_{j=1}^{\mathcal{K}}|\alpha_{ij}|P_X(x) \right)}$, $c_2\triangleq|\sigma_{\text{min}}(P_{X|Y})|\left(\min_y P_Y(y)\right)$ and $[\alpha_{ij}]_{\{1\leq i,j\leq \mathcal{K}\}}\triangleq P_{X|Y}^{-1}$.
\end{proposition}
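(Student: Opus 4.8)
The plan is to reduce the statement to a direct application of Proposition 1 (the $I(U;Y)$ local approximation) after checking that the feasible set of the maximization on the left matches, up to the perturbation substitution, the feasible set defining $P_1$. First I would observe that the constraints \eqref{prop1} and \eqref{prop2} on $J_u$ translate verbatim into \eqref{c1} and \eqref{c2} on $L_u$ via the change of variables $L_u = [\sqrt{P_X}^{-1}]J_u$, exactly as recalled in the preliminaries; and the box constraint \eqref{jadid}, namely $\frac{-P_X(x)}{1+\epsilon}\leq J_u(x)\leq P_X(x)$, becomes $\frac{-\sqrt{P_X}}{1+\epsilon}\leq L_u\leq \sqrt{P_X}$ after multiplying each coordinate by $1/\sqrt{P_X(x)}>0$. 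Hence the two optimization domains are in bijection, and the objective $I(Y;U)$ on the left equals $\frac12\epsilon^2\sum_u P_U\|WL_u\|^2 + o(\epsilon^2)$ by Proposition 1 applied pointwise in $u$. This yields the claimed equality $=P_1+o(\epsilon^2)$ and the approximation $\cong P_1$.

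The one genuine thing to verify — and I expect this to be the main obstacle — is that the condition $\epsilon<\max\{c_1,c_2\}$ actually guarantees feasibility, i.e. that the substitution $P_{X|U=u}=P_X+\epsilon J_u$ with $J_u$ satisfying \eqref{jadid} always produces a legitimate conditional distribution and that the perturbation expansion underlying Proposition 1 is valid on this set. The upper bound $J_u(x)\le P_X(x)$ already forces $P_{X|U}(x|u)\le (1+\epsilon)P_X(x)$, which is fine; the delicate direction is nonnegativity and, more importantly, that $J_u$ is realizable by some valid kernel $P_{U|Y}$ through the Markov chain $X-Y-U$. Writing $P_{X,U=u} = P_{X|Y}P_{U=u|Y}\,[\ldots]$ and inverting, one needs $P_{U|Y}$ to remain a nonnegative (stochastic) kernel when $P_{X|U=u}$ is perturbed by $\epsilon J_u$; this is where $P_{X|Y}^{-1}$ enters, and the worst-case amplification of the perturbation through $P_{X|Y}^{-1}$ is controlled precisely by the quantities $c_1$ (an $\ell_\infty\to\ell_\infty$-type bound using the row sums $\sum_j|\alpha_{ij}|P_X(x)$ of the inverse, scaled by $\min_y P_Y(y)$) and $c_2$ (an $\ell_2$-type bound using $|\sigma_{\min}(P_{X|Y})|$, again scaled by $\min_y P_Y(y)$). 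So the key step is: bound $\|P_{U=u|Y} - P_{U=u}\|$ (in the appropriate norm) by $\epsilon$ times a factor depending on $\|P_{X|Y}^{-1}\|$ and the box size in \eqref{jadid}, and show this is $\le \min_y P_Y(y)$ whenever $\epsilon<\max\{c_1,c_2\}$, which keeps every entry of the reconstructed kernel in $[0,1]$. Having two alternative sufficient thresholds $c_1$ and $c_2$ (hence the $\max$) simply reflects using two different norm inequalities on the inverse, and one takes whichever is larger.

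Concretely I would carry out the steps in this order: (i) state the change of variables $L_u=[\sqrt{P_X}^{-1}]J_u$ and note the one-to-one correspondence of constraints \eqref{prop1}$\leftrightarrow$\eqref{c1}, \eqref{prop2}$\leftrightarrow$\eqref{c2}, \eqref{jadid}$\leftrightarrow$ the box on $L_u$; (ii) invoke Proposition 1 to replace the objective by $\frac12\epsilon^2\sum_u P_U\|WL_u\|^2$ up to $o(\epsilon^2)$, uniformly over the (bounded) feasible set, so that the $o(\epsilon^2)$ term survives the maximization; (iii) verify the feasibility/validity claim under $\epsilon<\max\{c_1,c_2\}$ by the norm-amplification argument sketched above, treating the $c_1$-case (entrywise bound via $\sum_j|\alpha_{ij}|P_X(x)$) and the $c_2$-case (spectral bound via $\sigma_{\min}(P_{X|Y})$) separately; (iv) conclude by combining (ii) and (iii) to obtain the displayed equality and the $\cong P_1$ approximation. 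The routine computations in (i) and (ii) are immediate from the preliminaries; the only real content is the constant-chasing in (iii), which I would not grind through here beyond identifying that $c_1$ comes from an $\infty$-norm estimate and $c_2$ from a $2$-norm estimate on $P_{X|Y}^{-1}$, each divided into $\min_y P_Y(y)$ to ensure the perturbed kernel stays stochastic.
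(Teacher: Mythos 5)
Your proposal matches the paper's proof in all essentials: the same change of variables $L_u=[\sqrt{P_X}^{-1}]J_u$ translating \eqref{prop1}, \eqref{prop2}, \eqref{jadid} into \eqref{c1}, \eqref{c2} and the box constraint on $L_u$, the same invocation of the local KL approximation of $I(Y;U)$, and the same identification of $c_1$ as an entrywise ($\ell_\infty$-type) estimate and $c_2$ as a spectral estimate on $P_{X|Y}^{-1}$, each combined with $\min_y P_Y(y)$ to control $|\epsilon P_{X|Y}^{-1}J_u(y)/P_Y(y)|$. The only cosmetic difference is motivation: the paper imposes $|\epsilon P_{X|Y}^{-1}J_u(y)/P_Y(y)|<1$ explicitly to justify the second-order Taylor expansion of $\log(1+x)$ in the KL approximation (which simultaneously yields nonnegativity of $P_{Y|U=u}$), whereas you frame it as keeping the reconstructed kernel stochastic; the constant-chasing is identical either way.
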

\begin{proof}
	As we outlined earlier $I(U;Y)$ can be rewritten as $\frac{1}{2}\epsilon^2\sum_u P_U\|WL_u\|^2+o(\epsilon^2)$ for an invertible leakage matrix $P_{X|Y}$. 
	We emphasize that to approximate $I(U;Y)$, using the Markov chain $X-Y-U$ and invertible $P_{X|Y}$ we can rewrite $P_{Y|U=u}$ as perturbations of $P_Y$ as follows
	\begin{align*}
	P_{Y|U=u}&=P_{X|Y}^{-1}[P_{X|U=u}-P_X]+P_Y\\&=\epsilon\cdot P_{X|Y}^{-1}J_u+P_Y.
	\end{align*}
	Then, by using local approximation of the KL-divergence which is based on the second order Taylor expansion of $\log(1+x)$ we get
	\begin{align*}
	I(Y;U)&=\sum_u P_U(u)D(P_{Y|U=u}||P_Y)\\&=\sum_u P_U(u)\sum_y\! P_{Y|U=u}(y)\log\!\!\left(\!1\!+\!\epsilon\frac{P_{X|Y}^{-1}J_u(y)}{P_Y(y)}\right)\\&=\frac{1}{2}\epsilon^2\sum_u P_U\sum_y
	\frac{(P_{X|Y}^{-1}J_u)^2}{P_Y}+o(\epsilon^2)\\
	&=\frac{1}{2}\epsilon^2\sum_u P_U\|WL_u\|^2+o(\epsilon^2).
	\end{align*}
	For more details see \cite[Proposition 3]{khodam}.
	Furthermore, \eqref{prop1} and \eqref{prop2} are replaced by \eqref{c1} and \eqref{c2}. By using $L_u=[\sqrt{P_X}^{-1}]J_u$, \eqref{jadid} can be rewritten as 
	\begin{align*}
	\eqref{jadid}\leftrightarrow \frac{-\sqrt{P_X}}{1+\epsilon}\leq L_u\leq \sqrt{P_X}.
	\end{align*}
	For approximating $I(U;Y)$, we use the second Taylor expansion of $\log(1+x)$. Therefore, we must have $|\epsilon\frac{P_{X|Y}^{-1}J_u(y)}{P_Y(y)}|<1$ for all $u$ and $y$. One sufficient condition for $\epsilon$ to satisfy this inequality is to have $\epsilon<|\sigma_{\text{min}}(P_{X|Y})|\left(\min_y P_Y(y)\right)=c_2$, since in this case we have
	\begin{align*}
	\epsilon^2|P_{X|Y}^{-1}J_u(y)|^2&\leq\epsilon^2\left\lVert P_{X|Y}^{-1}J_u\right\rVert^2\leq\epsilon^2 \sigma_{\max}^2\left(P_{X|Y}^{-1}\right)\left\lVert J_u\right\rVert^2\\&\stackrel{(a)}\leq\frac{\epsilon^2}{\sigma^2_{\text{min}}(P_{X|Y})}<\min_{y\in\mathcal{Y}} P_Y^2(y),
	\end{align*}
	which implies $|\epsilon\frac{P_{X|Y}^{-1}J_u(y)}{P_Y(y)}\!|<1$. The step (a) follows from \eqref{jadid}, since \eqref{jadid} implies $\|J_u\|^2\leq \sum_x P_X(j)= 1$. Furthermore, another sufficient condition for $\epsilon$ to satisfy the inequality $|\epsilon\frac{P_{X|Y}^{-1}J_u(y)}{P_Y(y)}|<1$ is to have $\epsilon<\frac{\min P_Y(y)}{\max_i \left(\sum_{j=1}^{\mathcal{K}}|\alpha_{ij}|P_X(i) \right)}=c_1$, since in this case we have
	\begin{align*}
	\epsilon|P_{X|Y}^{-1}J_u(y)|&\stackrel{(a)}{\leq} \epsilon \sum_x |\alpha_{yx}||J_u(x)|\leq \epsilon \sum_x |\alpha_{yx}|P_X(x)\\ &< \min_y P_Y(y) \frac{\sum_x |\alpha_{yx}|P_X(x)}{\max_y \left(\sum_{x=1}^{\mathcal{K}}|\alpha_{yx}|P_X(x) \right)}\\&\leq \min_y P_Y(y),
	\end{align*}
	where (a) follows by $P_{X|Y}^{-1}J_u(y)=\sum_x \alpha_{yx}J_u(x)$.
\end{proof}
\begin{remark}
	In Section \ref{dis3}, we discuss how the range of  $\epsilon$ can be extended. 
\end{remark}
Next, we find lower and upper bounds on $P_1$. The lower bounds are shown to be optimal up to a constant scaling factor. To do so, let us define $\sigma_{\text{max}}$ and $L^*$ as the maximum singular value of matrix $W$ and corresponding singular vector, respectively. Furthermore, we assume $\|L^*\|=1$, otherwise we scale it. As we argued before, $\sigma_{\text{max}}>1$ and $L^*\perp \sqrt{P_X}$, e.g., see \cite[Appendix C]{khodam}.
\begin{lemma}\label{lem2}
Let $\gamma_1\geq1$ and $\gamma_2\geq1$ be the smallest scaling factors which ensure that $\frac{L^*}{\gamma_1}$ and $-\frac{L^*}{\gamma_2}$ satisfy $\frac{-\sqrt{P_X}}{1+\epsilon}\leq L\leq \sqrt{P_X}$, respectively. In other words, we divide $L^*$ by the smallest possible number $\gamma_1\geq1$ to ensure that the privacy constraint holds. Furthermore, let $\gamma_{\max}$ correspond to the smallest scaling factor to ensure the feasibility of $\frac{L^*}{\gamma_{\max}}$ and $-\frac{L^*}{\gamma_{\max}}$. In other words, to find the second lower bounds, we divide both $L^*$ and $-L^*$ with the same scaling factor $\gamma_{\max}$. We have
\begin{align}
\frac{1}{2}\epsilon^2\frac{\sigma_{\text{max}}^2}{\gamma_{\max}^2}\leq \frac{1}{2}\epsilon^2\frac{\sigma_{\text{max}}^2}{\gamma_1\gamma_2}\leq P_1\leq \frac{1}{2}\epsilon^2\sigma_{\text{max}}^2,
\end{align}  
Finally, for $|\mathcal{X}|=|\mathcal{Y}|=\mathcal{K}=2$, we have
\begin{align}
P_1=\frac{1}{2}\epsilon^2\frac{\sigma_{\text{max}}^2}{\gamma_1\gamma_2}.\label{cs}
\end{align}
\end{lemma}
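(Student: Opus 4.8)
The plan is to treat the three displayed bounds separately and then specialise to $\mathcal{K}=2$. For the upper bound $P_1\leq\tfrac12\epsilon^2\sigma_{\max}^2$, I would observe that the box constraint $\tfrac{-\sqrt{P_X}}{1+\epsilon}\leq L_u\leq\sqrt{P_X}$ forces $|L_u(x)|\leq\sqrt{P_X(x)}$ coordinatewise (the lower endpoint is even tighter than $-\sqrt{P_X(x)}$), hence $\|L_u\|^2\leq\sum_x P_X(x)=1$ for every $u$. Since $\|WL_u\|^2\leq\sigma_{\max}^2\|L_u\|^2\leq\sigma_{\max}^2$, averaging over $u$ with weights $P_U$ gives $\sum_u P_U\|WL_u\|^2\leq\sigma_{\max}^2$, and multiplying by $\tfrac12\epsilon^2$ yields the claim.

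For the middle lower bound I would exhibit an explicit feasible point with $|\mathcal{U}|=2$: take $L_1=L^*/\gamma_1$, $L_2=-L^*/\gamma_2$, $P_U(1)=\gamma_1/(\gamma_1+\gamma_2)$, $P_U(2)=\gamma_2/(\gamma_1+\gamma_2)$. By the definitions of $\gamma_1,\gamma_2$ both $L_1,L_2$ lie in the box; both are orthogonal to $\sqrt{P_X}$ because $L^*$ is, so \eqref{c1} holds; and $P_U(1)L_1+P_U(2)L_2=\tfrac{1}{\gamma_1+\gamma_2}(L^*-L^*)=\bm 0$, so \eqref{c2} holds. A direct computation using $\|WL^*\|^2=\sigma_{\max}^2$ (since $\|L^*\|=1$ and $L^*$ is the maximal right singular vector) gives $\sum_u P_U\|WL_u\|^2=\tfrac{\sigma_{\max}^2}{\gamma_1+\gamma_2}\bigl(\tfrac1{\gamma_1}+\tfrac1{\gamma_2}\bigr)=\tfrac{\sigma_{\max}^2}{\gamma_1\gamma_2}$, so the objective at this point is $\tfrac12\epsilon^2\sigma_{\max}^2/(\gamma_1\gamma_2)$, which lower-bounds $P_1$. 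The leftmost inequality is then one line: the box is convex and contains the origin, so $tL^*$ stays in the box as $|t|$ decreases; hence $L^*/\gamma$ (resp.\ $-L^*/\gamma$) is feasible iff $\gamma\geq\gamma_1$ (resp.\ $\gamma\geq\gamma_2$), so $\gamma_{\max}=\max\{\gamma_1,\gamma_2\}$ and therefore $\gamma_{\max}^2\geq\gamma_1\gamma_2$.

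For the sharp value when $\mathcal{K}=2$, the structural fact is that $\{\sqrt{P_X}\}^{\perp}$ is one-dimensional in $\mathbb{R}^2$; since $L^*$ is the unit right singular vector of $W$ orthogonal to $\sqrt{P_X}$, constraint \eqref{c1} forces every feasible $L_u$ to be a scalar multiple $L_u=t_uL^*$. Box membership of $t_uL^*$ is then equivalent to $t_u\in[-1/\gamma_2,\,1/\gamma_1]$ (again by the scaling/convexity argument, with $1/\gamma_1$ and $-1/\gamma_2$ the extreme feasible scalings in the $+L^*$ and $-L^*$ directions), and the objective becomes $\tfrac12\epsilon^2\sigma_{\max}^2\sum_u P_U(u)t_u^2$ subject to $\sum_u P_U(u)t_u=0$. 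Viewing $t_U$ as a random variable with mean $0$ supported on $[-1/\gamma_2,1/\gamma_1]$, the pointwise inequality $(t_U-1/\gamma_1)(t_U+1/\gamma_2)\leq 0$, i.e.\ $t_U^2\leq(1/\gamma_1-1/\gamma_2)\,t_U+1/(\gamma_1\gamma_2)$, gives $\mathbb{E}[t_U^2]\leq 1/(\gamma_1\gamma_2)$ after taking expectations. This matches the value achieved by the two-point construction above, so $P_1=\tfrac12\epsilon^2\sigma_{\max}^2/(\gamma_1\gamma_2)$.

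\textbf{Main obstacle.} I expect the $\mathcal{K}=2$ case to be the delicate part: carefully justifying that the orthogonality and box constraints collapse to a single scalar interval, and that the extremal (endpoint) two-point distribution maximises the second moment under the zero-mean constraint — everything else is a feasibility check or a norm inequality. One should also confirm the side facts that $\gamma_1,\gamma_2$ are finite and $\geq 1$ (the box is compact with the origin in its interior, and $L^*$ has mixed signs since $\langle L^*,\sqrt{P_X}\rangle=0$ with $\sqrt{P_X}>0$), but these only guarantee that the construction is well posed and do not enter the estimates.
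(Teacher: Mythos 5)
Your proof is correct, and for the upper bound and the two lower bounds it coincides with the paper's argument: the same coordinatewise bound $|L_u(x)|\leq\sqrt{P_X(x)}$ giving $\|L_u\|^2\leq 1$, and the same binary-$U$ construction with weights $\gamma_1/(\gamma_1+\gamma_2)$, $\gamma_2/(\gamma_1+\gamma_2)$ and vectors $L^*/\gamma_1$, $-L^*/\gamma_2$ (the $\gamma_{\max}$ variant likewise). Where you genuinely depart from the paper is the $\mathcal{K}=2$ equality \eqref{cs}. The paper asserts that it suffices to take $U$ binary with the two vectors at the extreme scalings $1/\gamma_1$ and $-1/\gamma_2$, and bounds $\|L_u\|$ by $1/\gamma_1$ or $1/\gamma_2$ in a way that presupposes the form of the optimizer. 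You instead reduce, via the one-dimensionality of $\{\sqrt{P_X}\}^\perp$, to maximizing $\mathbb{E}[t_U^2]$ over zero-mean random variables supported on $[-1/\gamma_2,1/\gamma_1]$, and close it with the pointwise inequality $(t_U-1/\gamma_1)(t_U+1/\gamma_2)\leq 0$. This handles arbitrary $|\mathcal{U}|$ and arbitrary (not necessarily extremal) scalings in one stroke, so your converse for \eqref{cs} is actually tighter in rigor than the paper's; the paper's version is shorter but leaves the restriction to extremal binary $U$ essentially unjustified. Your side remarks (that $\gamma_1,\gamma_2$ are finite and that $L^*$ cannot lie in the box because $\|L^*\|=1$ together with $L^*\perp\sqrt{P_X}$ forces mixed signs) are consistent with the paper's Proposition~\ref{prop4}.
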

\begin{proof}
	The proof is provided in Appendix A.
\end{proof}
Next, we present a result on the feasibility of $L^*$ and $-L^*$.
\begin{proposition}\label{prop4}
	Both $L^*$ and $-L^*$, with $\|L^*\|=1$, do not simultaneously satisfy the privacy constraint $\frac{-\sqrt{P_X}}{1+\epsilon}\leq L\leq \sqrt{P_X}$ in \eqref{jadid}. In other words, we have $\gamma_1>1$ and $\gamma_2\geq 1$, or $\gamma_1\geq1$ and $\gamma_2> 1$.
\end{proposition}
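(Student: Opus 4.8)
The plan is a short proof by contradiction that exploits the asymmetry of the box $\frac{-\sqrt{P_X}}{1+\epsilon}\leq L\leq \sqrt{P_X}$. First I would reduce the claim to a statement about feasibility: by definition $\gamma_1\geq 1$ is the smallest scaling factor with $L^*/\gamma_1$ feasible and $\gamma_2\geq 1$ the smallest with $-L^*/\gamma_2$ feasible, so $\gamma_1=1$ means $L^*$ itself satisfies \eqref{jadid} and $\gamma_2=1$ means $-L^*$ itself does. Hence the stated dichotomy ``$\gamma_1>1,\gamma_2\geq 1$ or $\gamma_1\geq 1,\gamma_2>1$'' is equivalent to ``not both $\gamma_1=1$ and $\gamma_2=1$'', i.e., it suffices to show that $L^*$ and $-L^*$ cannot both satisfy $\frac{-\sqrt{P_X}}{1+\epsilon}\leq L\leq \sqrt{P_X}$.

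Assume toward a contradiction that both do. The constraint on $L^*$ reads $-\frac{\sqrt{P_X(x)}}{1+\epsilon}\leq L^*(x)\leq \sqrt{P_X(x)}$ for all $x$, while the constraint on $-L^*$, after multiplying through by $-1$, reads $-\sqrt{P_X(x)}\leq L^*(x)\leq \frac{\sqrt{P_X(x)}}{1+\epsilon}$ for all $x$. Intersecting these two componentwise and using $\frac{1}{1+\epsilon}<1$ gives the symmetric bound $|L^*(x)|\leq \frac{\sqrt{P_X(x)}}{1+\epsilon}$ for every $x\in\mathcal{X}$.

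Squaring and summing over $x$, and recalling $\sum_{x}P_X(x)=1$, I would then get $\|L^*\|^2=\sum_x L^*(x)^2\leq \frac{1}{(1+\epsilon)^2}\sum_x P_X(x)=\frac{1}{(1+\epsilon)^2}<1$, where the last strict inequality uses $\epsilon>0$ (in fact $0<\epsilon<1$ throughout this section). This contradicts the normalization $\|L^*\|=1$, so $L^*$ and $-L^*$ cannot both be feasible, which is the desired conclusion.

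There is no genuine obstacle in this argument; the only point deserving a line of justification is the reduction in the first step, namely that $\gamma_i=1$ is equivalent to feasibility of the unscaled vector, which follows immediately from the definition of the smallest feasible scaling factor together with the fact that feasibility is preserved as the vector is scaled toward $\bm 0$ (the zero vector lies in the box). The conceptual content is simply that the feasible box and its reflection intersect only in the strictly smaller symmetric box $\{L:|L(x)|\leq \frac{\sqrt{P_X(x)}}{1+\epsilon}\ \forall x\}$, all of whose elements have Euclidean norm strictly less than that of the unit-norm vector $L^*$.
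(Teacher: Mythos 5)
Your proof is correct and follows essentially the same route as the paper's: assume both $L^*$ and $-L^*$ are feasible, negate one constraint and intersect to obtain the symmetric bound $|L^*(x)|\leq \frac{\sqrt{P_X(x)}}{1+\epsilon}$, and then contradict $\|L^*\|=1$ by summing squares. Your added remark making explicit that $\gamma_i=1$ is equivalent to feasibility of the unscaled vector is a useful clarification but does not change the substance of the argument.
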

\begin{proof}
	Assume that $L^*$ and $-L^*$ satisfy the privacy constraint \eqref{jadid}. we have
	\begin{align}
	\frac{-\sqrt{P_X}}{1+\epsilon}\leq L^*\leq \sqrt{P_X}\label{d}\\
	\frac{-\sqrt{P_X}}{1+\epsilon}\leq -L^*\leq \sqrt{P_X}.\label{dd}
	\end{align}
	Multiplying \eqref{d} by $-1$ and combining it with \eqref{dd} we get
	\begin{align}\label{ddd}
	\frac{-\sqrt{P_X}}{1+\epsilon}\leq -L^*\leq 	\frac{\sqrt{P_X}}{1+\epsilon}. 
	\end{align}
	Hence, using \eqref{ddd} we obtain
	\begin{align*}
	\|L^*\|^2\leq \frac{1}{\left(1+\epsilon\right)^2},
	\end{align*}
	which contradicts $\|L^*\|=1$.
\end{proof}
By using Corollary \ref{cor1}, Proposition \ref{pos3} and Lemma \ref{lem2} we obtain the following result.
\begin{theorem}\label{th1}
	For all $\epsilon<\max\{c_1,c_2\}$, we have
	\begin{align}
	\eqref{problem}&\geq\!\!\!\!\!\!\!\!\!\! \max_{\begin{array}{c} 
			\substack{J_u,\ P_U: X-Y-U,\\ J_u\ \text{satisfies}\ \eqref{prop1},\ \eqref{prop2},\ \text{and}\ \eqref{jadid}}
	\end{array}} \!\!\!\!\!\!\!\!\!\!\!\!\!\!\!\!I(Y;U)\nonumber\\&\cong P_1\nonumber\\&\geq\frac{1}{2}\epsilon^2\frac{\sigma_{\text{max}}^2}{\gamma_1\gamma_2}\geq \frac{1}{2}\epsilon^2\frac{\sigma_{\text{max}}^2}{\gamma_{\max}^2},
\end{align}
	where $\gamma_1$, $\gamma_2$ and $\gamma_{\max}$ are defined in Lemma \ref{lem2}. Furthermore, $c_1$ and $c_2$ are defined in Proposition \ref{pos3}.
\end{theorem}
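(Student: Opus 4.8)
The plan is to assemble Theorem~\ref{th1} entirely from results already proven in the excerpt, so that the ``proof'' is really a chain of citations rather than new work. First I would invoke Corollary~\ref{cor1}: for all $\epsilon<1$ it gives
\[
\eqref{problem}\ \geq\ \max_{\substack{J_u,\ P_U:\ X-Y-U,\\ J_u\ \text{satisfies}\ \eqref{prop1},\ \eqref{prop2},\ \text{and}\ \eqref{jadid}}} I(Y;U),
\]
which is exactly the first inequality in the statement. Note that the hypothesis $\epsilon<\max\{c_1,c_2\}$ is enough, since $c_1,c_2<1$ by construction (each is a minimum probability times a quantity bounded by $1$ — indeed $|\sigma_{\min}(P_{X|Y})|\le 1$ because $W$ has smallest singular value $1$ and $W=[\sqrt{P_Y}^{-1}]P_{X|Y}^{-1}[\sqrt{P_X}]$, so $P_{X|Y}$ has all singular values at most something controlled; in any case $c_2\le\min_y P_Y(y)<1$, and similarly $c_1<1$), so the regime of Theorem~\ref{th1} is contained in the regime $\epsilon<1$ of Corollary~\ref{cor1}.

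Second, I would apply Proposition~\ref{pos3}, whose hypothesis is precisely $\epsilon<\max\{c_1,c_2\}$ and invertibility of $P_{X|Y}$: it identifies the middle quantity above with $P_1+o(\epsilon^2)$, and hence $\cong P_1$, giving the ``$\cong P_1$'' line. Third, I would apply Lemma~\ref{lem2}, which states the sandwich $\tfrac12\epsilon^2\sigma_{\max}^2/\gamma_{\max}^2\le\tfrac12\epsilon^2\sigma_{\max}^2/(\gamma_1\gamma_2)\le P_1\le\tfrac12\epsilon^2\sigma_{\max}^2$; keeping only the two lower bounds on $P_1$ yields the final chain $P_1\ge\tfrac12\epsilon^2\sigma_{\max}^2/(\gamma_1\gamma_2)\ge\tfrac12\epsilon^2\sigma_{\max}^2/\gamma_{\max}^2$. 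Stringing these three steps together produces the displayed inequality of Theorem~\ref{th1} verbatim, with the definitions of $\gamma_1,\gamma_2,\gamma_{\max}$ and $c_1,c_2$ already in place from Lemma~\ref{lem2} and Proposition~\ref{pos3}.

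The only genuine obstacle is bookkeeping on the hypotheses: one must check that the $\epsilon$-range $\epsilon<\max\{c_1,c_2\}$ of the theorem simultaneously satisfies the $\epsilon<1$ requirement of Corollary~\ref{cor1} and Lemma~\ref{lem1}, the invertibility requirement, and the approximation-validity requirement $|\epsilon\, P_{X|Y}^{-1}J_u(y)/P_Y(y)|<1$ used inside Proposition~\ref{pos3}. The last of these is handled inside Proposition~\ref{pos3} itself (that is exactly why $c_1$ and $c_2$ are defined as they are), and the $\epsilon<1$ point follows from $c_1,c_2<1$ as noted above; so in the write-up I would state this compatibility of ranges explicitly and then simply concatenate Corollary~\ref{cor1}, Proposition~\ref{pos3}, and Lemma~\ref{lem2}. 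No new estimates are needed.

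\begin{proof}
Since $c_2=|\sigma_{\min}(P_{X|Y})|\big(\min_y P_Y(y)\big)\le \min_y P_Y(y)<1$ and likewise $c_1<1$, the assumption $\epsilon<\max\{c_1,c_2\}$ implies $\epsilon<1$, so Corollary~\ref{cor1} applies and gives
\begin{align*}
\eqref{problem}\ \geq\ \max_{\begin{array}{c}\substack{J_u,\ P_U:\ X-Y-U,\\ J_u\ \text{satisfies}\ \eqref{prop1},\ \eqref{prop2},\ \text{and}\ \eqref{jadid}}\end{array}} I(Y;U).
\end{align*}
For $\epsilon<\max\{c_1,c_2\}$ and invertible $P_{X|Y}$, Proposition~\ref{pos3} identifies the right-hand side with $P_1+o(\epsilon^2)\cong P_1$, where $P_1$ is given by \eqref{lower}. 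Finally, Lemma~\ref{lem2} yields $P_1\geq \tfrac{1}{2}\epsilon^2\sigma_{\text{max}}^2/(\gamma_1\gamma_2)\geq \tfrac{1}{2}\epsilon^2\sigma_{\text{max}}^2/\gamma_{\max}^2$. Combining these three facts gives the claimed chain of inequalities.
\end{proof}
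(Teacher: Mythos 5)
Your proposal is correct and is essentially identical to the paper's own proof, which simply states that Theorem~\ref{th1} follows by concatenating Corollary~\ref{cor1}, Proposition~\ref{pos3}, and Lemma~\ref{lem2}. The extra bookkeeping you supply (checking that $\epsilon<\max\{c_1,c_2\}$ implies $\epsilon<1$, e.g.\ via $c_2\le\min_y P_Y(y)<1$ since $|\sigma_{\min}(P_{X|Y})|\le 1$) is a harmless and correct addition that the paper leaves implicit.
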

\begin{proof}
	The proof is based on Corollary \ref{cor1}, Proposition \ref{pos3} and Lemma \ref{lem2}.
\end{proof}
After finding $L_u$ and $P_U$ that attain the lower bounds in Theorem \ref{th1}, we can find the joint distribution $P_{XYU}$ as follows
\begin{align*}
P_{X|U=0}&=P_X+\epsilon[\sqrt{P_X}]L_1,\\
P_{X|U=1}&=P_X+\epsilon[\sqrt{P_X}]L_2,\\
P_{Y|U=0}&=P_X+\epsilon P_{X|Y}^{-1}[\sqrt{P_X}]L_1,\\
P_{Y|U=1}&=P_X+\epsilon P_{X|Y}^{-1}[\sqrt{P_X}]L_2.
\end{align*}
where $L_1$, $L_2$ and marginal distribution of $U$ are obtained in Lemma \ref{lem2}. Finally, $P_{XYU}(x,y,u)=P_{X|Y}(x|y)P_{Y|U}(y|u)P_U(u)$.
\subsection{Second Approach: Direct Approximation of \eqref{problem}}
In this section, we use \eqref{c3} which is equivalent to the bounded LIP in \eqref{local1} to approximate \eqref{problem}. Let us recall that the constraint in \eqref{jadid}, i.e., $\frac{-\sqrt{P_X}}{1+\epsilon}\leq L_u\leq \sqrt{P_X},\ \forall u$, implies \eqref{c3}, since we have $e^{\epsilon}\geq \epsilon+1$ and $e^{-\epsilon}\leq \frac{1}{1+\epsilon}$.
\begin{proposition}\label{sw}
	For all $\epsilon<\max\{c_1',c_2'\}$ and invertible $P_{X|Y}$, we have
	\begin{align*}
	\max_{\begin{array}{c} 
		\substack{P_{U|Y}: X-Y-U,\\ -\epsilon\leq \log(\frac{P_{X|U}(x|u)}{P_{X}(x)})\leq \epsilon, \forall x,u}
		\end{array}} \!\!\!\!\!\!\!\!\!\!\!\!\!\!\!\!I(Y;U)&=P_2+o(\epsilon^2)\cong P_2\geq P_1,
	\end{align*}
	where 
	\begin{align}\label{pp}
	P_2\triangleq\!\!\!\!\! \max_{\begin{array}{c} 
		\substack{L_u,P_U: L_u \text{and} P_U \text{satisfy}\\  \eqref{c1}, \eqref{c2}, \text{and}\ \eqref{c3}}
		\end{array}} \!\!\!\!\!\!\!\!0.5\epsilon^2\!\!\left(\!\sum_u \!\!P_U\|WL_u\|^2\! \right),
	\end{align}
	and $c_1' \triangleq \log\left(\frac{\min_y P_Y(y)}{\max_i \left(\sum_{j=1}^{\mathcal{K}}|\alpha_{ij}|P_X(x) \right)}+1\right)$, $c_2'\triangleq \log\left(|\sigma_{\text{min}}(P_{X|Y})|\left(\min_y P_Y(y)\right)+1\right)$. 
\end{proposition}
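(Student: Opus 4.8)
The plan is to mirror the proof of Proposition \ref{pos3}, but this time working with the exact LIP constraint \eqref{c3} rather than the strengthened constraint \eqref{jadid}. First I would invoke the local KL-approximation: exactly as in Proposition \ref{pos3}, using the Markov chain $X-Y-U$ and invertibility of $P_{X|Y}$, write $P_{Y|U=u}=P_Y+\epsilon P_{X|Y}^{-1}J_u$, expand $I(Y;U)=\sum_u P_U(u)D(P_{Y|U=u}\|P_Y)$ via the second-order Taylor expansion of $\log(1+x)$, and obtain $I(Y;U)=\tfrac12\epsilon^2\sum_u P_U\|WL_u\|^2+o(\epsilon^2)$, where $L_u=[\sqrt{P_X}^{-1}]J_u$ and $W=[\sqrt{P_Y}^{-1}]P_{X|Y}^{-1}[\sqrt{P_X}]$. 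The constraints \eqref{prop1}, \eqref{prop2} are equivalent to \eqref{c1}, \eqref{c2}, and \eqref{prop3} — which is exactly equivalent to the bounded LIP \eqref{local1} — rewrites as \eqref{c3}. This gives the claimed approximation $\max I(Y;U)=P_2+o(\epsilon^2)\cong P_2$.

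Next I would handle the range of $\epsilon$. The Taylor expansion is only valid when $|\epsilon\,P_{X|Y}^{-1}J_u(y)/P_Y(y)|<1$ for all $u,y$. The bound $c_i'=\log(c_i+1)$ is chosen precisely so that $\epsilon<c_i'$ forces $e^\epsilon-1<c_i$, which in turn (via \eqref{c3}, since \eqref{c3} gives $J_u(x)\le (e^\epsilon-1)P_X(x)$ and $J_u(x)\ge(e^{-\epsilon}-1)P_X(x)$, so $|J_u(x)|\le(e^\epsilon-1)P_X(x)$) yields $\|J_u\|\le(e^\epsilon-1)$ and $\|J_u\|_1\le(e^\epsilon-1)$. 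Then the two chains of inequalities from Proposition \ref{pos3} go through verbatim with $\epsilon$ replaced by $e^\epsilon-1$: one bounds $\epsilon|P_{X|Y}^{-1}J_u(y)|$ by $(e^\epsilon-1)\sigma_{\max}(P_{X|Y}^{-1})\|J_u\|$ using $c_2'$, the other bounds it entrywise by $(e^\epsilon-1)\sum_x|\alpha_{yx}|P_X(x)$ using $c_1'$. Hence under $\epsilon<\max\{c_1',c_2'\}$ the approximation is justified.

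Finally, the inequality $P_2\ge P_1$: this follows because \eqref{jadid} (rewritten as $\tfrac{-\sqrt{P_X}}{1+\epsilon}\le L_u\le\sqrt{P_X}$) implies \eqref{c3}, which the excerpt already records — indeed $e^\epsilon\ge 1+\epsilon$ and $e^{-\epsilon}\le\tfrac1{1+\epsilon}$ give $\tfrac{e^{-\epsilon}-1}{\epsilon}\sqrt{P_X}\le\tfrac{-1}{1+\epsilon}\sqrt{P_X}\cdot\tfrac{1}{\epsilon}\cdot\epsilon$... more cleanly: any $(L_u,P_U)$ feasible for $P_1$ is also feasible for $P_2$ since its constraint set is contained in that of $P_2$, and the two objectives are identical. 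Therefore the maximum defining $P_2$ is over a superset, giving $P_2\ge P_1$.

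The main obstacle I anticipate is not any single step but making the $\epsilon$-range bookkeeping airtight: one must carefully track that \eqref{c3} — unlike the cleaner symmetric bound in \eqref{jadid} — only gives $|J_u(x)|\le(e^\epsilon-1)P_X(x)$ (the asymmetric lower bound $(e^{-\epsilon}-1)P_X(x)$ is smaller in magnitude), and verify that $\sum_x|J_u(x)|\le e^\epsilon-1$ and $\|J_u\|^2\le\sum_x(e^\epsilon-1)^2P_X(x)^2\le(e^\epsilon-1)^2$ are the right substitutes for the "$\le 1$" and "$\le\epsilon$" estimates used in Proposition \ref{pos3}. Everything else is a transcription of the earlier argument with $\epsilon\mapsto e^\epsilon-1$ inside the constraint-derived bounds, while the $o(\epsilon^2)$ error term is still controlled by $\epsilon$ itself through the Taylor remainder. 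For $\mathcal{K}=2$ one could additionally note, as in Lemma \ref{lem2}, that $P_2$ admits a closed form, but that is beyond what the statement asks.
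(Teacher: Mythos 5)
Your proposal is correct and follows essentially the same route as the paper: invoke the local KL-approximation of $I(Y;U)$ exactly as in Proposition \ref{pos3}, justify the validity range $\epsilon<\max\{c_1',c_2'\}$ by rerunning the two sufficient-condition chains with $\epsilon$ replaced by $e^\epsilon-1$ (coming from $|J_u(x)|\le\frac{e^\epsilon-1}{\epsilon}P_X(x)$ under \eqref{c3}), and obtain $P_2\ge P_1$ from the containment of the \eqref{jadid}-feasible set in the \eqref{c3}-feasible set via $e^\epsilon\ge1+\epsilon$ and $e^{-\epsilon}\le\frac{1}{1+\epsilon}$. The minor bookkeeping point you flag about whether $\|J_u\|$ is bounded by $\frac{e^\epsilon-1}{\epsilon}$ or $e^\epsilon-1$ is resolved the same way in the paper, so there is no gap.
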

\begin{proof}
	The equality follows from \eqref{approx1} and the inequality holds since the privacy constraint $\frac{-\sqrt{P_X}}{1+\epsilon}\leq L_u\leq \sqrt{P_X},\ \forall u$, implies \eqref{c3}. To obtain the bounds on $\epsilon$, similar to Proposition \ref{pos3} we must have $|\epsilon\frac{P_{X|Y}^{-1}J_u(y)}{P_Y(y)}|<1$ for all $u$ and $y$. One sufficient condition for $\epsilon$ to satisfy this inequality is to have $\epsilon<\log\left(|\sigma_{\text{min}}(P_{X|Y})|\left(\min_y P_Y(y)\right)+1\right)=c_2'$, since in this case we have
	\begin{align*}
	\epsilon^2|P_{X|Y}^{-1}J_u(y)|^2&\leq\epsilon^2\left\lVert P_{X|Y}^{-1}J_u\right\rVert^2\leq\epsilon^2 \sigma_{\max}^2\left(P_{X|Y}^{-1}\right)\left\lVert J_u\right\rVert^2\\&\stackrel{(a)}\leq\frac{(e^{\epsilon}-1)^2}{\sigma^2_{\text{min}}(P_{X|Y})}<\min_{y\in\mathcal{Y}} P_Y^2(y),
	\end{align*}
	which implies $|\epsilon\frac{P_{X|Y}^{-1}J_u(y)}{P_Y(y)}\!|<1$. The step (a) follows from \eqref{c3}, since \eqref{c3} implies $\|J_u\|^2\leq (\frac{e^\epsilon-1}{\epsilon})\sum_x P_X(j)= \frac{e^\epsilon-1}{\epsilon}$. Furthermore, another sufficient condition for $\epsilon$ to satisfy the inequality $|\epsilon\frac{P_{X|Y}^{-1}J_u(y)}{P_Y(y)}|<1$ is to have $\epsilon<\log\left(\frac{\min_y P_Y(y)}{\max_i \left(\sum_{j=1}^{\mathcal{K}}|\alpha_{ij}|P_X(x) \right)}+1\right)=c_1'$, since in this case we have
	\begin{align*}
	\epsilon|P_{X|Y}^{-1}J_u(y)|&\stackrel{(a)}{\leq} \epsilon \sum_x |\alpha_{yx}||J_u(x)|\\&\stackrel{(b)}{\leq} (e^\epsilon-1) \sum_x |\alpha_{yx}|P_X(x)\\ &< \min_y P_Y(y) \frac{\sum_x |\alpha_{yx}|P_X(x)}{\max_y \left(\sum_{x=1}^{\mathcal{K}}|\alpha_{yx}|P_X(x) \right)}\\&\leq \min_y P_Y(y),
	\end{align*}
	where (a) follows by $P_{X|Y}^{-1}J_u(y)=\sum_x \alpha_{yx}J_u(x)$ and $|\sum_i a_ib_i|\leq \sum_i |a_i||b_i|$. Furthermore, (b) follows by the privacy constraint which implies $|J_u(x)|\leq \frac{e^\epsilon-1}{\epsilon}P_X(x)$.
\end{proof}
Next, we study $P_2$ and find lower and upper bounds on it. We show that the lower bounds are tight up to a constant factor. Similar to Lemma \ref{lem2}, let $\sigma_{\text{max}}$ and $L^*$ be the maximum singular value of matrix $W$ and the corresponding singular vector with $\|L^*\|=1$. 
\begin{lemma}\label{lem3}
	 Let $\lambda_1>0$ and $\lambda_2>0$ be the largest possible scaling factors such that $\lambda_1L^*$ and $-\lambda_2L^*$ satisfy \eqref{c3}, i.e., we scale $L^*$ by $\lambda_1$ and $-L^*$ by $\lambda_2$. Both $\lambda_1$ and $\lambda_2$ can be larger or smaller than 1. Furthermore, let $\lambda'$ be the largest factor such that $\lambda'L^*$ and $-\lambda'L^*$ satisfy \eqref{c3}. Then, we have
	\begin{align*}
	\frac{1}{2}\epsilon^2\sigma_{\text{max}}^2(\lambda')^2\leq\frac{1}{2}\epsilon^2\sigma_{\text{max}}^2\lambda_1\lambda_2\leq P_2&\leq
	 \frac{1}{2}\epsilon^2\sigma_{\text{max}}^2\left(\frac{e^{\epsilon}-1}{\epsilon}\right)^2\\&=\frac{1}{2}\sigma_{\text{max}}^2\left(e^{\epsilon}-1\right)^2.
	\end{align*}
	Finally, for $|\mathcal{X}|=|\mathcal{Y}|=\mathcal{K}=2$, we have
	\begin{align}
	P_2=\frac{1}{2}\epsilon^2\sigma_{\text{max}}^2\left(\lambda_1\lambda_2\right).
	\end{align}
\end{lemma}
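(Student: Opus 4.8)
The plan is to mirror the structure of the proof of Lemma \ref{lem2} (Appendix A), adapting it to the constraint \eqref{c3} instead of \eqref{jadid}. First I would establish the upper bound. Since the objective $0.5\epsilon^2\sum_u P_U\|WL_u\|^2$ is linear in $P_U$, for any feasible family $\{L_u\}$ it is bounded above by $0.5\epsilon^2 \max_u \|WL_u\|^2$, so it suffices to bound $\|WL_u\|^2$ for a single $L_u$ satisfying \eqref{c1} and \eqref{c3}. Writing $L_u = \sum_i \beta_i v_i$ in the singular-vector basis of $W$ (with $v_{\min}=\sqrt{P_X}$ and $\beta_{\min}=0$ by \eqref{c1}), we get $\|WL_u\|^2 = \sum_i \sigma_i^2 \beta_i^2 \le \sigma_{\max}^2 \|L_u\|^2$. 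The constraint \eqref{c3} gives $|L_u(x)| \le \left(\frac{e^{\epsilon}-1}{\epsilon}\right)\sqrt{P_X(x)}$ for every $x$ (the lower side $\frac{e^{-\epsilon}-1}{\epsilon}$ is smaller in magnitude), hence $\|L_u\|^2 \le \left(\frac{e^{\epsilon}-1}{\epsilon}\right)^2 \sum_x P_X(x) = \left(\frac{e^{\epsilon}-1}{\epsilon}\right)^2$. Multiplying out yields $P_2 \le \frac{1}{2}\epsilon^2 \sigma_{\max}^2 \left(\frac{e^{\epsilon}-1}{\epsilon}\right)^2 = \frac{1}{2}\sigma_{\max}^2(e^{\epsilon}-1)^2$.

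Next I would construct the lower bounds by exhibiting feasible points. For the $\lambda_1\lambda_2$ bound, take $\mathcal{U}=\{0,1\}$ and set $L_0 = \lambda_1 L^*$, $L_1 = -\lambda_2 L^*$; by definition of $\lambda_1,\lambda_2$ these satisfy \eqref{c3}, and since $L^* \perp \sqrt{P_X}$ so do $L_0,L_1$, giving \eqref{c1}. The balance constraint \eqref{c2}, $P_U(0)L_0 + P_U(1)L_1 = \mathbf 0$, forces $P_U(0)\lambda_1 = P_U(1)\lambda_2$, i.e. $P_U(0) = \frac{\lambda_2}{\lambda_1+\lambda_2}$, $P_U(1) = \frac{\lambda_1}{\lambda_1+\lambda_2}$, which is a valid distribution. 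Plugging in, $\sum_u P_U\|WL_u\|^2 = \frac{\lambda_2}{\lambda_1+\lambda_2}\lambda_1^2\sigma_{\max}^2 + \frac{\lambda_1}{\lambda_1+\lambda_2}\lambda_2^2\sigma_{\max}^2 = \sigma_{\max}^2 \lambda_1\lambda_2$, so $P_2 \ge \frac{1}{2}\epsilon^2\sigma_{\max}^2\lambda_1\lambda_2$. The $(\lambda')^2$ bound follows the same way with $\lambda_1=\lambda_2=\lambda'$, giving $P_U$ uniform and objective $\sigma_{\max}^2(\lambda')^2$; the inequality $\lambda_1\lambda_2 \ge (\lambda')^2$ holds because requiring the \emph{same} scaling on both $L^*$ and $-L^*$ is more restrictive, so $\lambda' \le \min\{\lambda_1,\lambda_2\}$ and hence $(\lambda')^2 \le \lambda_1\lambda_2$.

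Finally, for $\mathcal{K}=2$ I would argue the $\lambda_1\lambda_2$ lower bound is in fact tight. When $\mathcal{K}=2$, the orthogonality constraint \eqref{c1} pins every feasible $L_u$ to the one-dimensional line spanned by $L^*$ (the orthocomplement of $\sqrt{P_X}$ in $\mathbb{R}^2$), so $L_u = t_u L^*$ for scalars $t_u$, and \eqref{c3} restricts $t_u \in [-\lambda_2, \lambda_1]$. The objective becomes $\frac{1}{2}\epsilon^2\sigma_{\max}^2 \sum_u P_U(u) t_u^2$ subject to $\sum_u P_U(u) t_u = 0$. Maximizing $\mathbb{E}[t^2]$ of a mean-zero random variable supported in $[-\lambda_2,\lambda_1]$ is a standard extreme-point argument: the optimum is a two-point distribution on the endpoints $\{-\lambda_2,\lambda_1\}$, giving exactly $\mathbb{E}[t^2]=\lambda_1\lambda_2$, so $P_2 = \frac{1}{2}\epsilon^2\sigma_{\max}^2\lambda_1\lambda_2$. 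The main obstacle I anticipate is the last step: justifying that the maximizer for $\mathcal{K}=2$ is attained by a two-point distribution and that enlarging $|\mathcal{U}|$ beyond $2$ cannot help — this needs the convexity observation that $t \mapsto t^2$ is convex, so mass is pushed to the extreme feasible values of $t_u$, combined with the mean-zero constraint to conclude the support is exactly $\{-\lambda_2,\lambda_1\}$ (using $\lambda_1,\lambda_2 > 0$ so both endpoints are genuinely needed).
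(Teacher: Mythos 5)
Your proposal is correct and follows essentially the same route as the paper: the upper bound via $\|WL_u\|^2\leq\sigma_{\max}^2\|L_u\|^2$ combined with the norm bound $\|L_u\|^2\leq\left(\frac{e^{\epsilon}-1}{\epsilon}\right)^2$ from \eqref{c3}, the lower bounds via the explicit binary construction $L_1=\lambda_1L^*$, $L_2=-\lambda_2L^*$ with weights $\frac{\lambda_2}{\lambda_1+\lambda_2},\frac{\lambda_1}{\lambda_1+\lambda_2}$ forced by \eqref{c2}, and tightness for $\mathcal{K}=2$ by restricting to the one-dimensional orthocomplement of $\sqrt{P_X}$. Your closing extreme-point argument (maximizing $\mathbb{E}[t^2]$ of a mean-zero variable on $[-\lambda_2,\lambda_1]$) actually justifies more explicitly than the paper does why enlarging $|\mathcal{U}|$ beyond two cannot help, but it is the same underlying idea.
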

\begin{proof}
	The proof is provided in Appendix A.
\end{proof}
\begin{remark}
	In Lemma \ref{lem3}, if $L^*$ satisfies \eqref{c3}, we scale up $L^*$, i.e., we scale $L^*$ by the largest possible $\lambda_1\geq1$. Otherwise, we scale it down by largest possible $\lambda_1\leq1$. Similarly, if $-L^*$ satisfies \eqref{c3}, we scale it by $\lambda_2\geq1$, otherwise we scale it down. 
\end{remark}
\begin{proposition}\label{prop5}
	Both $L^*$ and $-L^*$ with $\|L^*\|=1$ do not simultaneously satisfy \eqref{c3}.
\end{proposition}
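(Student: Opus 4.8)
The plan is to reproduce the contradiction argument of Proposition~\ref{prop4}, now adapted to the \emph{asymmetric} two-sided bound in \eqref{c3}. Write $a\triangleq\frac{1-e^{-\epsilon}}{\epsilon}>0$ and $b\triangleq\frac{e^{\epsilon}-1}{\epsilon}>0$, so that \eqref{c3} reads $-a\sqrt{P_X}\leq L_u\leq b\sqrt{P_X}$ componentwise. The key preliminary fact I would isolate is that $a<1<b$ for every $\epsilon>0$; this is the pair of elementary strict inequalities $1-e^{-\epsilon}<\epsilon$ and $e^{\epsilon}-1>\epsilon$ (equivalently $e^{-x}>1-x$ and $e^{x}>1+x$ for $x>0$), which follow from strict convexity of the exponential.

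Next, assume toward a contradiction that both $L^*$ and $-L^*$, with $\|L^*\|=1$, satisfy \eqref{c3}. Multiplying the bound for $L^*$ by $-1$ gives $-b\sqrt{P_X}\leq -L^*\leq a\sqrt{P_X}$, and intersecting this coordinatewise with the bound for $-L^*$, namely $-a\sqrt{P_X}\leq -L^*\leq b\sqrt{P_X}$, yields $-a\sqrt{P_X}\leq -L^*\leq a\sqrt{P_X}$; here one uses $a<b$ so that the $a$-endpoints are the binding ones on both sides. Equivalently $|L^*(x)|\leq a\sqrt{P_X(x)}$ for all $x$, so squaring, summing over $x$, and using $\sum_x P_X(x)=1$ gives $\|L^*\|^2\leq a^2\sum_x P_X(x)=a^2<1$, which contradicts $\|L^*\|=1$. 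Hence $L^*$ and $-L^*$ cannot both be feasible for \eqref{c3}.

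I do not expect a genuine obstacle: the computation is essentially the $\ell_1$/$\ell_2$ estimate already carried out for Proposition~\ref{prop4}, the only new ingredient being the strict inequality $\frac{1-e^{-\epsilon}}{\epsilon}<1$. The one point that needs a moment's care is identifying which endpoint survives after intersecting the constraints for $L^*$ and $-L^*$: because the negative endpoint coming from the $b$-side has larger magnitude than the one coming from the $a$-side, it is the $a$-bound that stays active on both sides, and this is exactly what forces $\|L^*\|^2\le a^2<1$. Note also that Proposition~\ref{prop5} is strictly stronger than Proposition~\ref{prop4} (since \eqref{jadid} implies \eqref{c3}, feasibility for \eqref{c3} is a weaker requirement), so it does not follow formally from Proposition~\ref{prop4} and the direct argument above is the natural route.
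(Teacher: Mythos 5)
Your proposal is correct and follows essentially the same route as the paper's proof: assume both $L^*$ and $-L^*$ satisfy \eqref{c3}, negate one constraint and intersect, observe that the binding endpoint is $\frac{1-e^{-\epsilon}}{\epsilon}<1$ (your $a$), and conclude $\|L^*\|^2\leq a^2<1$, contradicting $\|L^*\|=1$. Your explicit isolation of $a<1<b$ and the remark on which endpoint survives the intersection merely make explicit what the paper states in passing.
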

\begin{proof}
	Assume that both $L^*$ and $-L^*$ satisfy \eqref{c3}. We have
	\begin{align}
	\left(\frac{e^{-\epsilon}-1}{\epsilon}\right)\sqrt{P_{X}}\leq L^* \leq \left(\frac{e^{\epsilon}-1}{\epsilon}\right)\sqrt{P_{X}},\label{s}\\
	\left(\frac{e^{-\epsilon}-1}{\epsilon}\right)\sqrt{P_{X}}\leq -L^* \leq \left(\frac{e^{\epsilon}-1}{\epsilon}\right)\sqrt{P_{X}},\label{ss}
	\end{align}
	Multiplying \eqref{s} by $-1$ and combining it with \eqref{ss} we obtain
	\begin{align}
	\left(\frac{e^{-\epsilon}-1}{\epsilon}\right)\sqrt{P_{X}}\leq -L^*\leq \left(\frac{1-e^{-\epsilon}}{\epsilon}\right)\sqrt{P_{X}},\label{sss}
	\end{align}  
	where the last line follows since for $1>\epsilon>0$, $\frac{e^{\epsilon}-1}{\epsilon}> 1> |\frac{e^{-\epsilon}-1}{\epsilon}|$. 
	Note that for $1>\epsilon>0$, $\frac{1-e^{-\epsilon}}{\epsilon}< 1$ since $e^{\epsilon}\geq \epsilon+1$ implies $e^{-\epsilon}\leq \frac{1}{1+\epsilon}\leq \frac{1}{1-\epsilon}$. 
	Finally, \eqref{sss} implies
	\begin{align*}
	\|L^*\|^2\leq \left(\frac{1-e^{-\epsilon}}{\epsilon}\right)^2< 1,
	\end{align*}
	which contradicts $\|L^*\|=1$.
\end{proof}
Using Proposition \ref{prop5} we conclude that $\lambda_1\geq1$ and $\lambda_2\geq1$ are not feasible. Intuitively, this holds due to the lower bound in the privacy constraint described in \eqref{c3}.
\begin{remark}
	In Lemma \ref{lem2} and Lemma \ref{lem3}, the scaling factors $\gamma_1$, $\gamma_2$, $\lambda_1,$ and $\lambda_2$ depend on the value of $\epsilon$.
	\end{remark}
\begin{remark}
	In contrast to \eqref{c3}, the upper bound in \eqref{jadid} does not depend on $\epsilon$. This results in simpler mechanism designs when using the first approach. On the other hand, the second approach yields larger utilities compared to the first approach, as $P_2\geq P_1$. 
	\end{remark}
 In the next result we summarize the bounds and approximations on \eqref{problem}.
\begin{theorem}
	For all $\epsilon<\max\{c_1',c_2'\}$, we have
	\begin{align}
	\eqref{problem}\cong P_2\geq P_1,
	\end{align}
	where
	\begin{align}
	P_2&\geq \frac{1}{2}\epsilon^2\sigma_{\text{max}}^2\max\{\frac{1}{\gamma_1\gamma_2},\lambda_1\lambda_2\}\\&\geq \frac{1}{2}\epsilon^2\sigma_{\text{max}}^2\max\{\frac{1}{\gamma_{\text{max}}^2},(\lambda')^2\},
	\end{align}
	where $c_1'$ and $c_2'$ are defined in Proposition \ref{sw}.
\end{theorem}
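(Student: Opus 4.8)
The plan is to obtain this theorem as a one-step assembly of Proposition~\ref{sw}, Lemma~\ref{lem2}, and Lemma~\ref{lem3}: no new estimate is required beyond verifying that all three results apply on the stated range $\epsilon<\max\{c_1',c_2'\}$.

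First I would invoke Proposition~\ref{sw}, which already gives, on this range, both $\eqref{problem}=P_2+o(\epsilon^2)\cong P_2$ and $P_2\geq P_1$, where $P_1$ and $P_2$ are the quadratic programs in \eqref{lower} and \eqref{pp}; this is exactly the first displayed line $\eqref{problem}\cong P_2\geq P_1$. Before chaining in the bound on $P_1$, I would settle domain compatibility: since $\log(1+t)\leq t$ for $t>0$, we have $c_1'=\log(1+c_1)\leq c_1$ and $c_2'=\log(1+c_2)\leq c_2$, so $\max\{c_1',c_2'\}\leq\max\{c_1,c_2\}$ and Lemma~\ref{lem2} (and its prerequisite Proposition~\ref{pos3}) apply; a short computation using $P_{X|Y}P_Y=P_X$ and the fact that the columns of $P_{X|Y}$ are probability vectors also yields $c_1\leq1$ and $c_2\leq1$, hence $c_1',c_2'<\log 2<1$, which in turn covers the $\epsilon<1$ condition inherited from Corollary~\ref{cor1} and Lemma~\ref{lem1}.

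With validity in place, the rest is just combining singular-value bounds. Lemma~\ref{lem3} gives $P_2\geq\frac{1}{2}\epsilon^2\sigma_{\text{max}}^2\lambda_1\lambda_2\geq\frac{1}{2}\epsilon^2\sigma_{\text{max}}^2(\lambda')^2$, and Lemma~\ref{lem2} gives $P_1\geq\frac{1}{2}\epsilon^2\sigma_{\text{max}}^2/(\gamma_1\gamma_2)\geq\frac{1}{2}\epsilon^2\sigma_{\text{max}}^2/\gamma_{\max}^2$; since $P_2\geq P_1$, the $P_1$-bound is also a lower bound on $P_2$. Taking the larger of the two lower bounds on $P_2$ gives $P_2\geq\frac{1}{2}\epsilon^2\sigma_{\text{max}}^2\max\{1/(\gamma_1\gamma_2),\lambda_1\lambda_2\}$, and then the relaxations $\gamma_1\gamma_2\leq\gamma_{\max}^2$ (Lemma~\ref{lem2}) and $\lambda_1\lambda_2\geq(\lambda')^2$ (Lemma~\ref{lem3}) give the coarser $P_2\geq\frac{1}{2}\epsilon^2\sigma_{\text{max}}^2\max\{1/\gamma_{\max}^2,(\lambda')^2\}$, which finishes the proof.

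I do not expect a genuine obstacle inside this theorem. The substantive content lives elsewhere: in Proposition~\ref{sw}, which carries the $o(\epsilon^2)$ control of the KL-approximation of $I(U;Y)$ and the feasible-set inclusion that forces $P_2\geq P_1$, and in the Appendix~A proofs of Lemma~\ref{lem2} and Lemma~\ref{lem3}, where the scaling-factor arguments realize the lower bounds through the pair $(\,c\,L^*,-c\,L^*)$ built from the top singular vector $L^*$ of $W$ (and establish the $\mathcal{K}=2$ tightness). Granting those, the present statement is a short corollary, and the only care it demands is the elementary domain check above: shrinking each $c_i$ to $c_i'=\log(1+c_i)$ keeps $\epsilon$ within the hypotheses of every result being invoked.
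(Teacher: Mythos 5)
Your proposal is correct and follows essentially the same route as the paper: the paper's proof is exactly the assembly of Proposition~\ref{sw} (for $\eqref{problem}\cong P_2\geq P_1$) with the lower bounds of Lemma~\ref{lem2} and Lemma~\ref{lem3}, using $P_2\geq P_1$ to import the $\gamma$-bound and the relations $\gamma_{\max}\geq\max\{\gamma_1,\gamma_2\}$, $\lambda'\leq\min\{\lambda_1,\lambda_2\}$ for the coarser line. Your domain check $c_1'\leq c_1$, $c_2'\leq c_2$ is the same observation the paper records in the remark immediately following the theorem, so nothing is missing or different in substance.
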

\begin{proof}
	The proof is based on Lemma \ref{lem2} and Lemma \ref{lem3}.
\end{proof}
\begin{remark}
	Clearly, $c_1'\leq c_1$ and $c_2'\leq c_2$. Hence, to compare the proposed approaches we consider $\epsilon<\max\{c_1',c_2'\}$. 
\end{remark}
\begin{remark}
	To use the lower bounds on $P_1$ and $P_2$ as lower bounds on \eqref{problem}, we need to find bounds on the error of approximations. We have obtained bounds on error of approximation in \cite{shah} using the strong $\ell_1$-privacy criterion and bounded mutual information as the leakage constraints. Following similar method, we can bound error of approximations in this paper. Let $\eqref{problem}=P_2+\text{error}_2(\epsilon)\geq P_1+\text{error}_1(\epsilon)$. Using the methods in \cite{shah}, we have $|\text{error}_2(\epsilon)|\leq C_2^{\epsilon}$ and $|\text{error}_1(\epsilon)|\leq C_1^{\epsilon}$. Then, we can strengthen Theorem 1 and Theorem 2 and obtain
	\begin{align}
	\eqref{problem}\geq \max\{P_2-|C_2^{\epsilon}|, P_1-|C_1^{\epsilon}|\}.
	\end{align} 
\end{remark}
\section{Discussion}
\subsection{Geometrical Interpretation}
In this part, we present geometrical studies of the proposed approaches in this paper. Considering both methods, the constraint $L_u\perp \sqrt{P_X}$ leads to a subspace where each element (point) is orthogonal to $\sqrt{P_X}$. Furthermore, the privacy constraints $\frac{-\sqrt{P_X}}{1+\epsilon}\leq L_u\leq \sqrt{P_X}$ or $\left(\frac{e^{-\epsilon}-1}{\epsilon}\right)\sqrt{P_{X}}\leq L_u \leq \left(\frac{e^{\epsilon}-1}{\epsilon}\right)\sqrt{P_{X}}$ lead to a subspace where the boundaries are the hyperplanes corresponding the boundaries within the inequalities. For simplicity let $\mathcal{K}=2$, i.e., a 2D problem. Considering the first approach, feasible sets corresponding \eqref{c1} and \eqref{jadid} are shown in Fig. \ref{geo1}. Subspace $S_1$ which is shown by a red line corresponds to the points which are orthogonal to $\sqrt{P_X}$. Subspace $S_2$ which is inside the pink rectangle describes the points satisfying the privacy constraint. Finally, $S_3$ (dark red line) shows the intersection of $S_1$ and $S_2$. The goal is to find vectors $\{L_u\}$ satisfying \eqref{c2} which maximizes $\sum_u \!\!P_U\|WL_u\|^2$. 
\begin{figure}[]
	\centering
	\includegraphics[width = 0.45\textwidth]{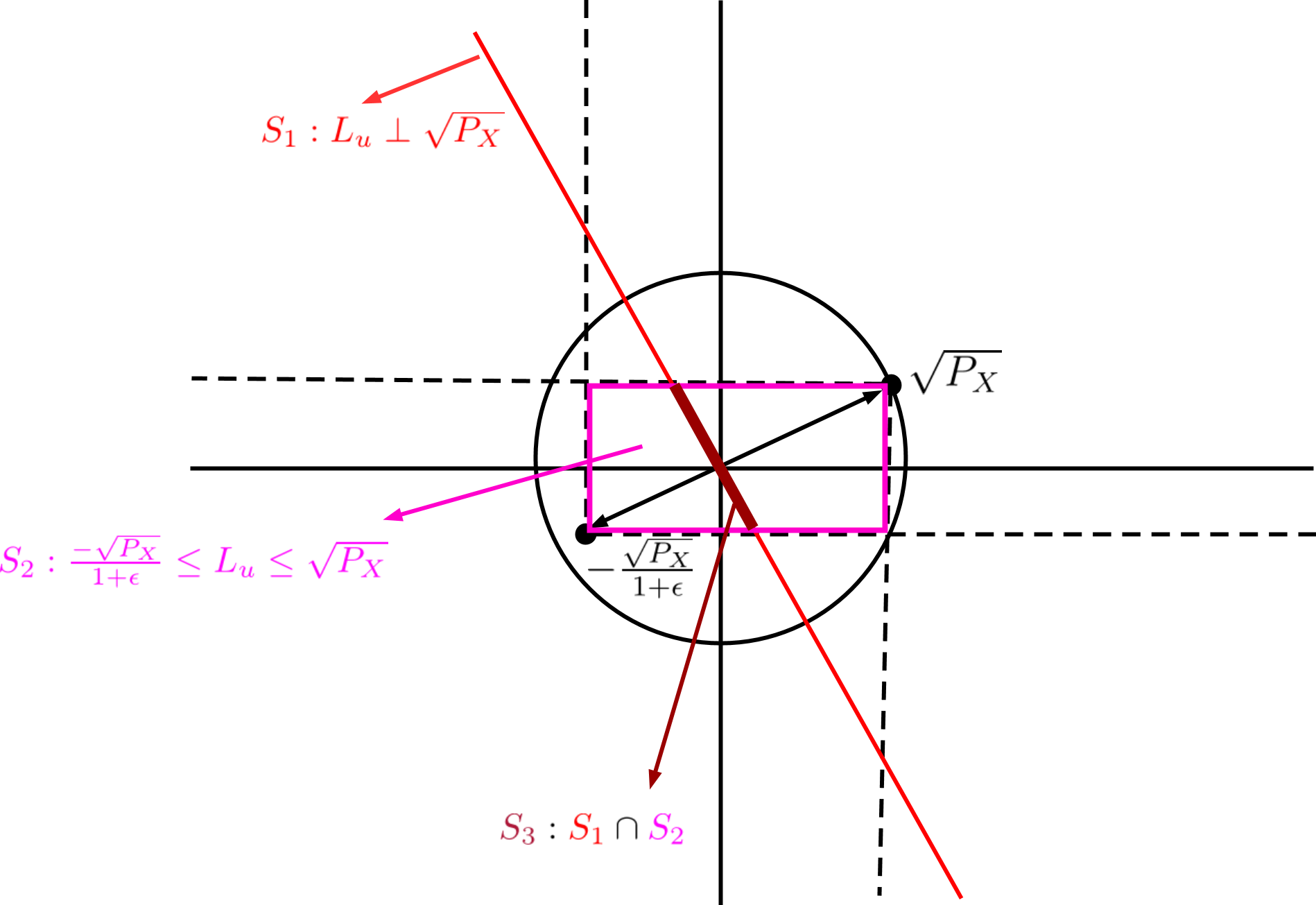}
	\caption{Feasible sets in the first approach. Here, $S_1$ and $S_2$ correspond to the points satisfying \eqref{c1} and \eqref{jadid}, respectively. The circle corresponds to the unit $\ell_2$-ball.}
	\label{geo1}
\end{figure}
Considering the second approach, feasible sets corresponding \eqref{c1} and \eqref{c3} are shown in Fig. \ref{geo2}. Here, subspace $S_2$ (inside the pink rectangle) corresponds to the points satisfying the privacy constraint in \eqref{c3}. As we can see, the feasible set $S_3$ is larger compared to the previous approach.
\begin{figure}[]
	\centering
	\includegraphics[width = 0.45\textwidth]{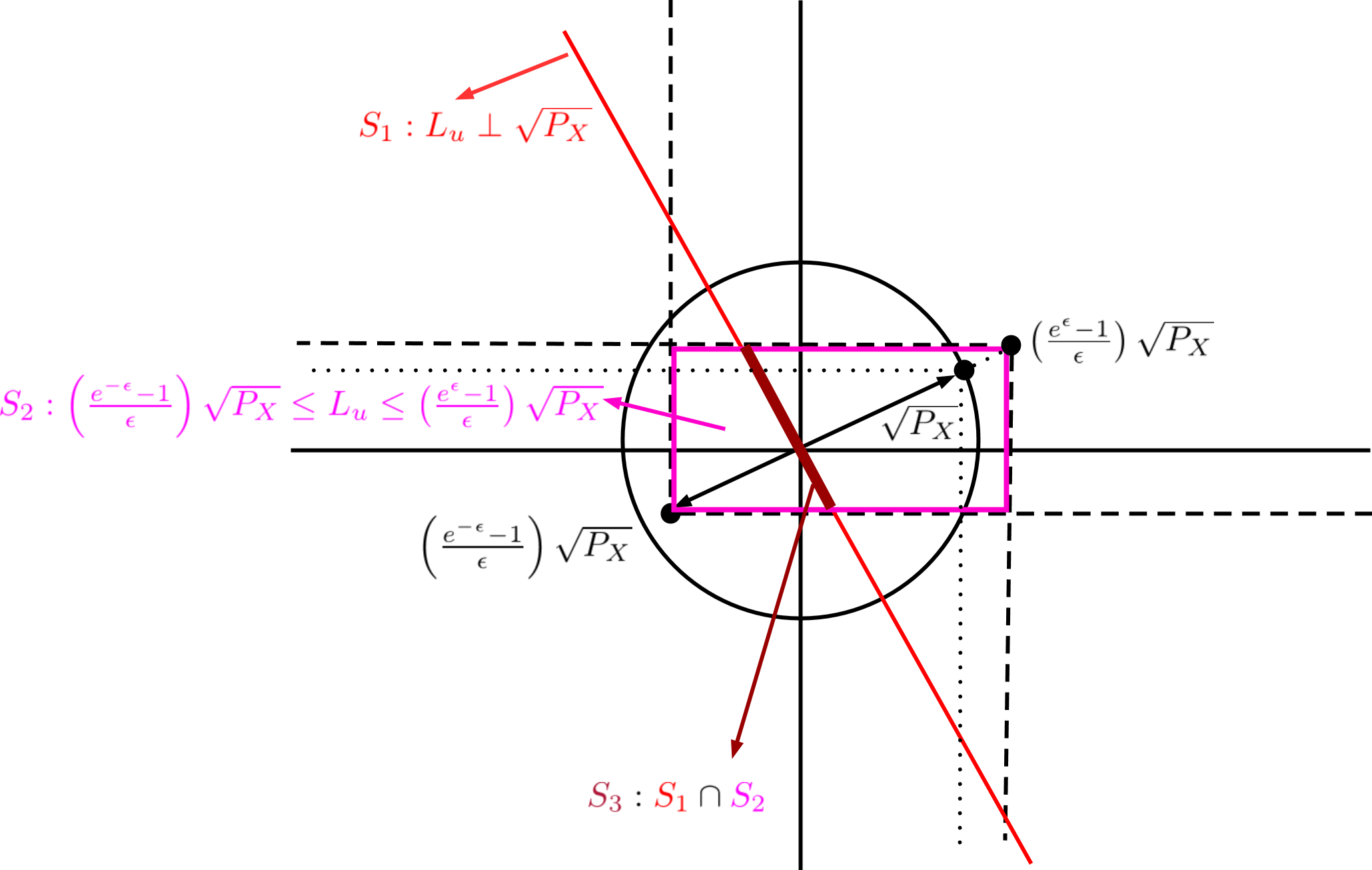}
	\caption{Feasible sets in the second approach. Here, $S_1$ and $S_2$ correspond to the points satisfying \eqref{c1} and \eqref{c3}, respectively. The circle corresponds to the unit $\ell_2$-ball. Compared to the first approach, the feasible set is larger which leads to higher utilities.}
	\label{geo2}
\end{figure}
In Fig. \ref{geo3}, long and short green arrows correspond to the vectors $\frac{L^*}{\gamma_1}$ and $-\frac{L^*}{\gamma_2}$ that achieve the lower bound in Lemma \ref{lem2}. Moreover, boundaries of the privacy constraint is shown by smaller dotted rectangle. We emphasize that we choose $\gamma_1$ and $\gamma_2$ so that both $\frac{L^*}{\gamma_1}$ and $-\frac{L^*}{\gamma_2}$ obtain largest possible $\ell_2$-norms (attain the boundaries). Furthermore, long and short dark red arrows correspond to the vectors $\lambda_1L^*$ and $-\lambda_2L^*$ that achieve the lower bound in Lemma \ref{lem3}. The boundaries of the privacy constraint is shown by larger dotted rectangle. Similarly, we choose $\lambda_1$ and $\lambda_2$ so that both $\lambda_1L^*$ and $-\lambda_2L^*$ obtain largest possible $\ell_2$-norms (attain the boundaries). Using Lemma \ref{lem2} and Lemma \ref{lem3}, $\frac{L^*}{\gamma_1}$ and $-\frac{L^*}{\gamma_2}$ as well as $\lambda_1L^*$ and $-\lambda_2L^*$, are optimal and attain $P_1$ and $P_2$, respectively. 
This discussion on the geometrical studies of the proposed approach to solve the privacy-utility trade-off in \eqref{problem} can easily be extended to any $\mathcal{K} > 2$. 
\begin{figure}[]
	\centering
	\includegraphics[width = 0.45\textwidth]{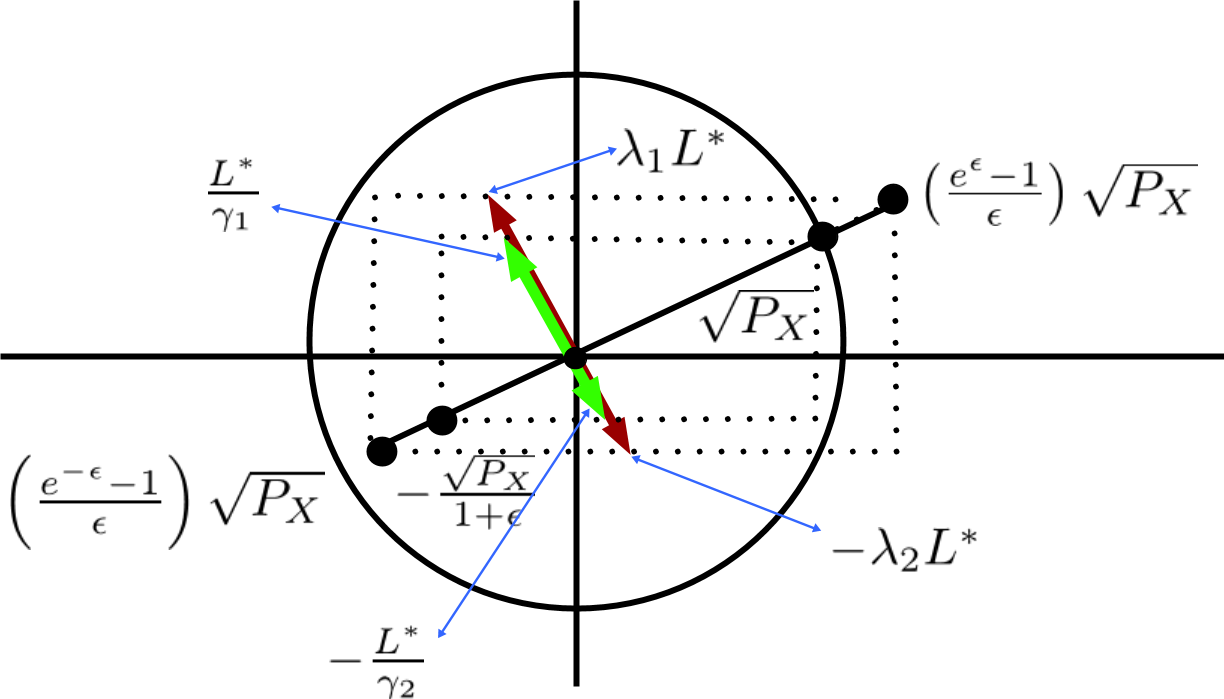}
	\caption{Acheivability of Lemma \ref{lem2} and Lemma \ref{lem3}. }
	\label{geo3}
\end{figure}
\subsection{Extending the approach considering max-lift leakage measure}
In this section, we extend the proposed approach to the privacy problem considered in \cite{zarab1}. Specifically, we use max-lift constrain instead of \eqref{local1}. Instead of \eqref{local1} we use
\begin{align}
\log(\frac{P_{X|U}(x|u)}{P_{X}(x)})\leq \epsilon, \forall x,u.\label{local2}
\end{align}
We have
\begin{subequations}\label{problem2}
	\begin{align}
&\sup_{P_{U|Y}} \ \ I(U;Y),\label{problem22}\\
&\text{subject to:}\ \ X-Y-U,\label{Markov2}\\
&\log(\frac{P_{X|U}(x|u)}{P_{X}(x)})\leq \epsilon, \forall x,u.\label{local22}
\end{align}
\end{subequations}
Clearly, the bounded LIP leads to the bounded max-lift criterion, i.e., \eqref{local1} results in \eqref{local2}. Hence, the maximization problem in \eqref{problem} is upper bounded by \eqref{problem} replacing \eqref{local1} with \eqref{local2}.
Following the first approach and using the upper bound $\log(1+x)\leq x$, we use the following strengthened privacy constraint instead of \eqref{local2} 
\begin{align}\label{pr}
L_u\leq \sqrt{P_X}, \forall u.
\end{align}
Let us define 
\begin{align}\label{lower2}
P_1'\triangleq\max_{\begin{array}{c} 
	\substack{L_u,P_U: \ L_u\leq \sqrt{P_X},\forall u ,\\ L_u\ \text{and}\ P_U\ \text{satisfy}\ \eqref{c1},\ \text{and}\ \eqref{c2}}
	\end{array}} \!\!\!\!\!\!\!\!\!\!\!\!\!\!\!\!\!\!\!0.5\epsilon^2\!\!\left(\sum_u \!\!P_U\|WL_u\|^2 \right).
\end{align}
Using similar techniques as Proposition \ref{pos3}, we have
\begin{align}
\eqref{problem2}\geq P_1'+o(\epsilon)\cong P_1'.
\end{align}
\begin{lemma}\label{22}
	If $L^*$ and $-L^*$ with $\|L^*\|=1$, satisfy \eqref{pr}, we have
	\begin{align}\label{ant}
	P_1'= \frac{1}{2}\epsilon^2\sigma_{\text{max}}^2.
	\end{align}
	Otherwise, we scale $L^*$ and $-L^*$ with smallest possible $\gamma_1\geq1$ and $\gamma_2\geq1$  so that $\frac{L^*}{\gamma_1}$ and $-\frac{L^*}{\gamma_2}$ satisfy \eqref{pr}. We have
	\begin{align}
	\frac{1}{2}\epsilon^2\frac{\sigma_{\text{max}}^2}{\gamma_1\gamma_2}\leq P_1'\leq \frac{1}{2}\epsilon^2\sigma_{\text{max}}^2.
	\end{align} 
	Finally, for $|\mathcal{X}|=|\mathcal{Y}|=\mathcal{K}=2$, we have
	\begin{align}
	P_1'=\frac{1}{2}\epsilon^2\frac{\sigma_{\text{max}}^2}{\gamma_1\gamma_2}.\label{css}
	\end{align}
\end{lemma}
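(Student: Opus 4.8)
The plan is to mirror the structure of the proof of Lemma \ref{lem2}, but with the single-sided constraint \eqref{pr} replacing the two-sided constraint \eqref{jadid}, so the argument only needs to track one boundary per sign of $L^*$. First I would observe the upper bound: any feasible $\{L_u\}$ lies in the unit-ball-intersected region and satisfies $L_u\perp\sqrt{P_X}$, hence $\|WL_u\|^2\le \sigma_{\text{max}}^2\|L_u\|^2$; combining this with $\sum_u P_U(u)\|L_u\|^2\le 1$, which follows because each feasible $L_u$ restricted to the orthogonal complement of $\sqrt{P_X}$ has norm at most $\|\sqrt{P_X}\|=1$ (one would spell out why the box $L_u\le\sqrt{P_X}$ together with $L_u\perp\sqrt{P_X}$ forces $\|L_u\|\le 1$, exactly as in the referenced argument), yields $0.5\epsilon^2\sum_u P_U\|WL_u\|^2\le 0.5\epsilon^2\sigma_{\text{max}}^2$. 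This gives the upper bound in all three displayed claims.

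Next I would handle the achievability/lower bound. Consider a two-symbol output $\mathcal{U}=\{1,2\}$ and set $L_1=\frac{L^*}{\gamma_1}$, $L_2=-\frac{L^*}{\gamma_2}$, where $\gamma_1,\gamma_2\ge 1$ are the smallest scalars making these vectors satisfy \eqref{pr}. These automatically satisfy \eqref{c1} since $L^*\perp\sqrt{P_X}$. To satisfy \eqref{c2}, i.e. $P_U(1)L_1+P_U(2)L_2=\bm 0$, choose $P_U(1)=\frac{\gamma_1}{\gamma_1+\gamma_2}$ wait—more carefully, $P_U(1)\frac{1}{\gamma_1}=P_U(2)\frac{1}{\gamma_2}$, so $P_U(1)=\frac{\gamma_1}{\gamma_1+\gamma_2}$ is not right; rather $P_U(1)/\gamma_1 = P_U(2)/\gamma_2$ gives $P_U(1)=\frac{\gamma_1}{\gamma_1+\gamma_2}$... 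I would just record the correct normalization $P_U(1)=\frac{\gamma_1}{\gamma_1+\gamma_2}$ after solving the single scalar equation, and verify $P_U$ is a valid distribution. Then $\sum_u P_U\|WL_u\|^2 = \frac{\gamma_1}{\gamma_1+\gamma_2}\frac{\sigma_{\text{max}}^2}{\gamma_1^2}+\frac{\gamma_2}{\gamma_1+\gamma_2}\frac{\sigma_{\text{max}}^2}{\gamma_2^2}=\frac{\sigma_{\text{max}}^2}{\gamma_1\gamma_2}$, giving the lower bound $P_1'\ge\frac{1}{2}\epsilon^2\frac{\sigma_{\text{max}}^2}{\gamma_1\gamma_2}$. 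When $L^*$ and $-L^*$ already satisfy \eqref{pr} we have $\gamma_1=\gamma_2=1$ and the bound collapses to \eqref{ant}.

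For the tightness claim \eqref{css} when $\mathcal{K}=2$, I would use that in $2$D the constraint $L_u\perp\sqrt{P_X}$ confines every feasible $L_u$ to the one-dimensional line spanned by $L^*$ (the unique unit vector orthogonal to $\sqrt{P_X}$, up to sign), so every $L_u$ is of the form $t_u L^*$ with $t_u L^*\le\sqrt{P_X}$ forcing $-\gamma_2^{-1}\le ... $ hmm—more precisely $t_u\le \gamma_1^{-1}$ when $t_u>0$ and $|t_u|\le\gamma_2^{-1}$ when $t_u<0$. Then $\sum_u P_U\|WL_u\|^2=\sigma_{\text{max}}^2\sum_u P_U t_u^2$ subject to $\sum_u P_U t_u=0$ and the one-sided bounds; a short Lagrangian/extreme-point argument shows the optimum uses exactly the two extreme values $t=\gamma_1^{-1}$ and $t=-\gamma_2^{-1}$ and reproduces $\sum_u P_U t_u^2=\frac{1}{\gamma_1\gamma_2}$, matching the lower bound. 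The main obstacle I anticipate is the $\mathcal{K}=2$ optimality step: one must argue carefully that no distribution over more than two points, or over interior points of the feasible segment, can do better—this is where the convexity of $t\mapsto t^2$ and the zero-mean constraint must be combined to push all mass to the two boundary points, and the bookkeeping of the asymmetric one-sided bounds needs care. Everything else follows the template of Lemma \ref{lem2} essentially verbatim.
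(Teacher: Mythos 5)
Your route is the same as the paper's: the paper proves only \eqref{ant} (taking $U$ uniform binary with $L_1=L^*$, $L_2=-L^*$) and explicitly defers everything else to ``a similar proof as Lemma \ref{lem2}'', and your lower-bound construction with $L_1=L^*/\gamma_1$, $L_2=-L^*/\gamma_2$ and $P_U(1)=\gamma_1/(\gamma_1+\gamma_2)$ (which, despite your mid-sentence hesitation, is the correct normalization) reproduces the Appendix~A computation verbatim, and correctly specializes to \eqref{ant} when $\gamma_1=\gamma_2=1$.

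There is, however, a concrete gap in the parts you claim carry over ``exactly as in the referenced argument.'' In Lemma \ref{lem2} the bound $\|L_u\|\le 1$ comes from the two-sided box $-\sqrt{P_X}/(1+\epsilon)\le L_u\le\sqrt{P_X}$, which bounds every component in absolute value by $\sqrt{P_X(x)}$. Under the one-sided max-lift constraint \eqref{pr} the negative components of $L_u$ are not constrained by the box at all, and adding $L_u\perp\sqrt{P_X}$ does not restore $\|L_u\|\le 1$: already for $\mathcal{K}=2$ with non-uniform $P_X$, every $L_u=tL^*$ with $-\sqrt{P_X(2)/P_X(1)}\le t\le\sqrt{P_X(1)/P_X(2)}$ is feasible, and one endpoint of this interval has $|t|>1$. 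So $\sum_u P_U(u)\|L_u\|^2\le 1$ cannot be obtained componentwise; any correct justification of the upper bound must invoke the mean-zero constraint \eqref{c2} across $u$, which your write-up does not do. The same issue propagates into your $\mathcal{K}=2$ step: once one of $\pm L^*$ is strictly feasible, the corresponding $\gamma_i$ equals $1$ (the lemma only permits scaling down), yet the feasible segment extends past $\pm L^*$, so its extreme values are not $\gamma_1^{-1}$ and $-\gamma_2^{-1}$. The convexity/extreme-point argument then pushes mass to the true endpoints $\alpha$ and $-\beta$ with $\alpha\beta=1$ and yields $\sum_u P_U(u)t_u^2=\alpha\beta=1$, not $1/(\gamma_1\gamma_2)$; it therefore does not ``reproduce'' \eqref{css} as you assert. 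This is precisely the step where the one-sided constraint genuinely departs from Lemma \ref{lem2} and needs its own argument (it is also the step the paper itself leaves implicit), so you should either supply that argument or restrict the claim to the scaled-down candidates the lemma actually considers.
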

\begin{proof}
	Here, we only prove \eqref{ant} and other statements can be shown by using similar proof as Lemma \ref{lem2}. If $L^*$ and $-L^*$ with $\|L^*\|=1$, satisfy \eqref{pr}, we choose $U$ to be a binary RV with weights $P_U^1=P_U^2=\frac{1}{2}$ and $L_1=L^*$, $L_2=-L^*$. We have
	\begin{align}
	\frac{1}{2}\epsilon^2\!\!\left(\sum_u \!\!P_U\|WL_u\|^2\right)=  \frac{1}{2}\epsilon^2\sigma_{\text{max}}^2.
	\end{align}
\end{proof}
Following the second approach, \eqref{local2} can be rewritten as 
\begin{align}\label{c33}
L_u \leq \left(\frac{e^{\epsilon}-1}{\epsilon}\right)\sqrt{P_{X}}, \ \forall x,u.
\end{align}
\begin{lemma}\label{33}
	If $L^*$ and $-L^*$ with $\|L^*\|=1$, satisfy \eqref{c33}, we scale them with largest possible $\lambda_1\geq 1$ and $\lambda_2\geq 1$ so that $\lambda_1L^*$ and $-\lambda_2L^*$ satisfy \eqref{c33}. Otherwise, we scale them with largest possible $\lambda_1> 0$ and $\lambda_2> 0$. We have
	\begin{align*}
	\frac{1}{2}\epsilon^2\sigma_{\text{max}}^2\lambda_1\lambda_2\leq P_2'\leq\frac{1}{2}\sigma_{\text{max}}^2\left(e^{\epsilon}-1\right)^2.
	\end{align*} 
	Furthermore, for $|\mathcal{X}|=|\mathcal{Y}|=\mathcal{K}=2$, we have
	\begin{align}
	P_2'=\frac{1}{2}\epsilon^2\sigma_{\text{max}}^2\left(\lambda_1\lambda_2\right).
	\end{align}
\end{lemma}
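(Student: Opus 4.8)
The plan is to mirror the proof of Lemma~\ref{lem3}: the only structural change is that the max-lift leakage produces the one-sided constraint \eqref{c33} on each $L_u$ in place of the two-sided \eqref{c3}, so the scaling factors $\lambda_1,\lambda_2$ are now defined through \eqref{c33} alone and the obstruction of Proposition~\ref{prop5} disappears — both $\lambda_1 L^*$ and $-\lambda_2 L^*$ may be scaled \emph{up} ($\lambda_1,\lambda_2\ge 1$) whenever they already lie inside \eqref{c33}. Here $P_2'$ is the analogue of $P_2$ in \eqref{pp} with \eqref{c3} replaced by \eqref{c33}. I would proceed in three steps: (i) a lower bound via an explicit two-point construction; (ii) an upper bound from the operator norm of $W$ and the privacy bound; (iii) exactness for $\mathcal K=2$ via a scalar second-moment problem.

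\emph{Lower bound.} I would take $\mathcal U=\{1,2\}$, $L_1=\lambda_1 L^*$, $L_2=-\lambda_2 L^*$, with $P_U(1)=\lambda_2/(\lambda_1+\lambda_2)$ and $P_U(2)=\lambda_1/(\lambda_1+\lambda_2)$. Since $L^*\perp\sqrt{P_X}$ (\cite[Appendix C]{khodam}), \eqref{c1} holds; by the definition of $\lambda_1$ and $\lambda_2$ both $L_u$ satisfy \eqref{c33}; and the chosen weights give $\sum_u P_U(u)L_u=\frac{\lambda_2\lambda_1}{\lambda_1+\lambda_2}L^*-\frac{\lambda_1\lambda_2}{\lambda_1+\lambda_2}L^*=\bm 0$, i.e.\ \eqref{c2}. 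Using $\|WL^*\|=\sigma_{\text{max}}$ (as $\|L^*\|=1$ and $L^*$ is the right singular vector of $\sigma_{\text{max}}$), one gets $\sum_u P_U(u)\|WL_u\|^2=\sigma_{\text{max}}^2\frac{\lambda_2\lambda_1^2+\lambda_1\lambda_2^2}{\lambda_1+\lambda_2}=\sigma_{\text{max}}^2\lambda_1\lambda_2$, hence $P_2'\ge\frac12\epsilon^2\sigma_{\text{max}}^2\lambda_1\lambda_2$. As in Proposition~\ref{pos3}, for $\epsilon$ in the stated range the induced $P_{X|U=u}=P_X+\epsilon[\sqrt{P_X}]L_u$ and $P_{Y|U=u}=P_Y+\epsilon P_{X|Y}^{-1}[\sqrt{P_X}]L_u$ are genuine distributions, so this is a feasible mechanism for \eqref{problem2}.

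\emph{Upper bound and $\mathcal K=2$.} For any feasible pair, $\|WL_u\|^2\le\sigma_{\text{max}}^2\|L_u\|^2$, and \eqref{c33} forces $L_u(x)\le\left(\frac{e^\epsilon-1}{\epsilon}\right)\sqrt{P_X(x)}$ for every $x$; bounding $\|L_u\|$ as in Lemma~\ref{lem3} then gives $\sum_u P_U(u)\|WL_u\|^2\le\sigma_{\text{max}}^2\left(\frac{e^\epsilon-1}{\epsilon}\right)^2$, i.e.\ $P_2'\le\frac12\sigma_{\text{max}}^2(e^\epsilon-1)^2$. For $\mathcal K=2$ the bound becomes exact by dimension counting: the orthogonal complement of $\sqrt{P_X}$ in $\mathbb R^2$ is one-dimensional and spanned by $L^*$ (which has one positive and one negative coordinate, being orthogonal to the all-positive $\sqrt{P_X}$), so every feasible $L_u$ equals $t_u L^*$ for a scalar $t_u$, and \eqref{c33} confines $t_u$ to the interval $[-\lambda_2,\lambda_1]$ whose endpoints are exactly these scalings. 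Letting $t$ denote the scalar taking value $t_u$ with probability $P_U(u)$, \eqref{c2} reads $\mathbb E[t]=0$ and the objective is $\frac12\epsilon^2\sigma_{\text{max}}^2\mathbb E[t^2]$; since $(t-\lambda_1)(t+\lambda_2)\le 0$ on $[-\lambda_2,\lambda_1]$, we have $\mathbb E[t^2]\le\lambda_1\lambda_2$ with equality iff $t$ is supported on $\{\lambda_1,-\lambda_2\}$ — which is the construction of step~(i). Therefore $P_2'=\frac12\epsilon^2\sigma_{\text{max}}^2\lambda_1\lambda_2$. (Lemma~\ref{22} follows from the same argument with $\frac{e^\epsilon-1}{\epsilon}$ replaced by $1$ and $\lambda_i$ by $1/\gamma_i$.)

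The step I expect to be the real obstacle is the general-$\mathcal K$ upper bound. Unlike the two-sided \eqref{c3}, the one-sided \eqref{c33} together with $L_u\perp\sqrt{P_X}$ does \emph{not} on its own control $\|L_u\|$ — it still allows arbitrarily large negative coordinates — so the clean estimate $\|L_u\|\le\frac{e^\epsilon-1}{\epsilon}$ used in Lemma~\ref{lem3} must be re-derived here by also invoking $L_u(x)\ge-\sqrt{P_X(x)}/\epsilon$ (nonnegativity of $P_{X|U=u}$) and the analogous condition coming from nonnegativity of $P_{Y|U=u}$, and by checking which of these actually binds for $\epsilon$ in the stated range. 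Beyond that, exactly as in Lemmas~\ref{lem2}--\ref{lem3}, for $\mathcal K>2$ there is no reduction to a scalar problem, so one should not expect to close the gap between $\frac12\epsilon^2\sigma_{\text{max}}^2\lambda_1\lambda_2$ and $\frac12\sigma_{\text{max}}^2(e^\epsilon-1)^2$ in general — only the two-sided sandwich, with the constructed two-point mechanism as the lower bound.
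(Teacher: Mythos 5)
Your proposal follows essentially the same route as the paper's: the paper's own proof of this lemma is a one-line remark that it is ``similar to Lemma~\ref{lem3}'', whose Appendix-A argument uses exactly your two-point construction ($L_1=\lambda_1L^*$, $L_2=-\lambda_2L^*$ with weights $\lambda_2/(\lambda_1+\lambda_2)$ and $\lambda_1/(\lambda_1+\lambda_2)$) for the lower bound and the operator-norm estimate $\|WL_u\|\le\sigma_{\text{max}}\|L_u\|$ for the upper bound. Two places where you go beyond the paper are worth flagging. First, for $\mathcal{K}=2$ the paper simply asserts that binary $U$ supported on the endpoint scalings attains $P_2'$; your reduction to a scalar $t\in[-\lambda_2,\lambda_1]$ with $\mathbb{E}[t]=0$ and the inequality $(t-\lambda_1)(t+\lambda_2)\le 0$ actually proves that no other mean-zero distribution on the feasible interval can beat $\mathbb{E}[t^2]=\lambda_1\lambda_2$, which is the justification the paper omits. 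Second, the obstacle you identify in the general-$\mathcal{K}$ upper bound is genuine: the one-sided max-lift constraint \eqref{c33} bounds each coordinate of $L_u$ only from above, so unlike \eqref{c3} it does not by itself yield $\|L_u\|\le(e^{\epsilon}-1)/\epsilon$, and the paper's appeal to the Lemma~\ref{lem3} computation silently assumes it does. As you note, the missing lower bound on the coordinates must come from nonnegativity of $P_{X|U=u}$, i.e.\ $L_u\ge-\sqrt{P_X}/\epsilon$, which for small $\epsilon$ gives only $\|L_u\|\le 1/\epsilon$ and hence the weaker bound $P_2'\le\frac{1}{2}\sigma_{\text{max}}^2$; the stated upper bound $\frac{1}{2}\sigma_{\text{max}}^2(e^{\epsilon}-1)^2$ therefore requires an additional argument for $\mathcal{K}>2$ (for $\mathcal{K}=2$ the sign pattern of $L^*$ restores two-sided control on the scalar $t$, so everything closes). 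Your lower bound and your $\mathcal{K}=2$ equality are complete and correct.
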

\begin{proof}
	The proof is similar to Lemma \ref{lem3}. The only difference is that both $L^*$ and $-L^*$ with $\|L^*\|=1$ can simultaneously satisfy \eqref{c33}. In other words, we can have $\lambda_1\geq 1$ and $\lambda_2\geq 1$ simultaneously.
\end{proof}
\begin{remark}
	In contrast to Proposition \ref{prop4} and Proposition \ref{prop5}, by using max-lift instead of LIP, both $L^*$ and $-L^*$ with $\|L^*\|=1$ can simultaneously satisfy \eqref{pr} or \eqref{c33}. This extends the optimality conditions of $P_1'$ and $P_2'$ compared to $P_1$ and $P_2$, i.e., the lower bounds on $P_1'$ and $P_2'$ are optimal for more cases compared to $P_1$ and $P_2$.      
\end{remark}
\begin{theorem}
	Let $\epsilon$ be sufficiently small. We have
	\begin{align}
	\eqref{problem2}\cong P_2'\geq P_1',
	\end{align}
	where
	\begin{align}
	P_2'&\geq \frac{1}{2}\epsilon^2\sigma_{\text{max}}^2\max\{\frac{1}{\gamma_1\gamma_2},\lambda_1\lambda_2\}.
	\end{align}
\end{theorem}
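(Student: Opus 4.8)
The plan is to mirror, step for step, the derivation of Theorem~2, with the two‑sided LIP constraints replaced by their one‑sided max‑lift analogues. First I would record the local approximation $\eqref{problem2}\cong P_2'$, where $P_2'$ is the natural counterpart of $P_2$ in \eqref{pp}, namely the maximum of $0.5\epsilon^2\sum_u P_U\|WL_u\|^2$ over $\{L_u,P_U\}$ obeying \eqref{c1}, \eqref{c2} and \eqref{c33}. The point is that \eqref{c33} is \emph{equivalent} to the max‑lift constraint \eqref{local22}: writing $P_{X|U=u}=P_X+\epsilon J_u$ one has $\log(1+\epsilon J_u(x)/P_X(x))\le\epsilon\iff J_u(x)\le\frac{e^\epsilon-1}{\epsilon}P_X(x)\iff L_u\le\frac{e^\epsilon-1}{\epsilon}\sqrt{P_X}$. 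Then, exactly as in the proofs of Propositions~\ref{pos3} and~\ref{sw}, for $\epsilon$ small enough that $|\epsilon\,P_{X|Y}^{-1}J_u(y)/P_Y(y)|<1$ for all $u,y$, the second‑order Taylor expansion of $D(P_{Y|U=u}\|P_Y)$ with $P_{Y|U=u}=P_Y+\epsilon\,P_{X|Y}^{-1}J_u$ gives $\eqref{problem2}=\frac{1}{2}\epsilon^2\sum_u P_U\|WL_u\|^2+o(\epsilon^2)$, and optimizing the leading term is precisely $P_2'$; the admissible range of $\epsilon$ comes from the $c_1',c_2'$‑type estimates of Proposition~\ref{sw}, using $\|J_u\|^2\le\frac{e^\epsilon-1}{\epsilon}\sum_x P_X(x)=\frac{e^\epsilon-1}{\epsilon}$.

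Second I would show $P_2'\ge P_1'$. The strengthened constraint \eqref{pr}, i.e.\ $L_u\le\sqrt{P_X}$, implies \eqref{c33}, i.e.\ $L_u\le\frac{e^\epsilon-1}{\epsilon}\sqrt{P_X}$, because $e^\epsilon\ge1+\epsilon$ forces $\frac{e^\epsilon-1}{\epsilon}\ge1$; hence the feasible set of the optimization defining $P_1'$ in \eqref{lower2} is contained in that defining $P_2'$, and since the objective $0.5\epsilon^2\sum_u P_U\|WL_u\|^2$ is identical, $P_2'\ge P_1'$. Together with the already‑recorded estimate $\eqref{problem2}\ge P_1'+o(\epsilon)\cong P_1'$, this yields the first displayed chain $\eqref{problem2}\cong P_2'\ge P_1'$.

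Third I would assemble the numerical lower bound. Lemma~\ref{33} gives $P_2'\ge\frac{1}{2}\epsilon^2\sigma_{\text{max}}^2\lambda_1\lambda_2$ via the binary construction $P_U^1=P_U^2=\frac{1}{2}$, $L_1=\lambda_1L^*$, $L_2=-\lambda_2L^*$, which is \eqref{c33}‑feasible by the definition of $\lambda_1,\lambda_2$, satisfies \eqref{c2} by symmetry, and satisfies \eqref{c1} because $L^*\perp\sqrt{P_X}$. Similarly, Lemma~\ref{22} gives $P_1'\ge\frac{1}{2}\epsilon^2\sigma_{\text{max}}^2/(\gamma_1\gamma_2)$ with $L_1=L^*/\gamma_1$, $L_2=-L^*/\gamma_2$; combining with $P_2'\ge P_1'$ gives $P_2'\ge\frac{1}{2}\epsilon^2\sigma_{\text{max}}^2/(\gamma_1\gamma_2)$. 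Taking the larger of the two bounds gives $P_2'\ge\frac{1}{2}\epsilon^2\sigma_{\text{max}}^2\max\{1/(\gamma_1\gamma_2),\lambda_1\lambda_2\}$, which is the claim.

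The main obstacle is, as in Propositions~\ref{pos3} and~\ref{sw}, making $\eqref{problem2}\cong P_2'$ fully rigorous: one must control the $o(\epsilon^2)$ remainder of the KL expansion \emph{uniformly} over all feasible $\{P_{U|Y}\}$ and pin down the $\epsilon$‑range on which it is valid (an explicit error bound of the type discussed earlier, borrowed from \cite{shah}, would be needed if one wants the approximation to yield an honest lower bound on $\eqref{problem2}$ rather than only an asymptotic equality). A secondary subtlety, flagged in the remark after Lemma~\ref{33}, is that in the max‑lift setting — unlike the LIP setting of Propositions~\ref{prop4} and~\ref{prop5} — $L^*$ and $-L^*$ with $\|L^*\|=1$ may \emph{both} be feasible for \eqref{pr} (resp.\ \eqref{c33}); the constructions in Lemmas~\ref{22} and~\ref{33} therefore have to be phrased with the ``scale up'' branch ($\gamma_i=1$, $\lambda_i\ge1$) and the ``scale down'' branch handled separately, which is why those lemmas carry the case split they do.
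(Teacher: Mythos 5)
Your proposal is correct and follows essentially the same route as the paper, which proves this theorem simply by invoking Lemma \ref{22} and Lemma \ref{33} (your feasible-set-containment argument for $P_2'\geq P_1'$ and the $\max$ of the two lemmas' lower bounds is exactly what is intended). One small slip in your recollection of Lemma \ref{33}'s construction: with $L_1=\lambda_1 L^*$ and $L_2=-\lambda_2 L^*$, the uniform weights $P_U^1=P_U^2=\tfrac{1}{2}$ do \emph{not} satisfy \eqref{c2} unless $\lambda_1=\lambda_2$; the paper uses $P_U^1=\frac{\lambda_2}{\lambda_1+\lambda_2}$, $P_U^2=\frac{\lambda_1}{\lambda_1+\lambda_2}$, which makes $\sum_u P_U(u)L_u=0$ and still yields the value $\tfrac{1}{2}\epsilon^2\sigma_{\text{max}}^2\lambda_1\lambda_2$ (the uniform choice is reserved for the $\lambda'$ bound).
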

\begin{proof}
	The proof is based on Lemma \ref{22} and Lemma \ref{33}.
\end{proof}
\subsection{Extending the approach considering Local Differential Privacy (LDP)}
In this section, we discuss how the proposed approach can be applied for the privacy problem with LDP as the leakage constraint. Specifically, we use LDP instead of \eqref{local1}. We use
\begin{align}
\log\left(\frac{P_{U|X}(U=u|X=x)}{P_{U|X}(U=u|X=x')}\right)\leq \epsilon, \forall x,x'\in\mathcal{X},\ u\in \mathcal{U}.\label{local3}
\end{align}
We rewrite the left hand side of \eqref{local3} as follows
\begin{align}\label{dif1}
\log\left(\frac{P_{U|X}(U=u|X=x)}{P_{U|X}(U=u|X=x')}\right)= \log\left(\frac{\frac{P_{X|U}(X=x|U=u)}{P_{X}(x)}}{\frac{P_{X|U}(X=x'|U=u)}{P_{X}(x')}}\right)
\end{align}
Considering high privacy regimes (noisy databases), i.e., small $\epsilon$, using LDP we can say that $U$ and $X$ are approximately independent. 
Hence, we can introduce a perturbation vector as 
\begin{align}\label{dif2}
P_{X|U=u}=P_X+\epsilon J_u.
\end{align} 
Using \eqref{dif1} and \eqref{dif2}, we have
\begin{align}\label{to}
\log\left(\frac{P_{U|X}(U=u|X=x)}{P_{U|X}(U=u|X=x')}\right)= \log\left(\frac{1+\epsilon\frac{J_u(x)}{P_X(x)}}{1+\epsilon\frac{J_u(x')}{P_X(x')}}\right)
\end{align}
To follows the first approach using \eqref{to} we have
\begin{align}\label{tooo}
\log(1+\epsilon\frac{J_u(x)}{P_X(x)})-\log(1+\epsilon\frac{J_u(x')}{P_X(x')})\leq \epsilon.
\end{align}
We then use upper and lower bounds on $\log(1+x)$ so that we get strengthened privacy constraints as follows
\begin{align}\label{too}
\frac{J_u(x)}{P_X(x)}\leq \frac{J_u(x')}{P_X(x')+\epsilon J_u(x')}, \forall x,x',u.
\end{align}
We recall that \eqref{too} implies \eqref{tooo}.
Using the second approach the LDP constraint in \eqref{local3} can be rewritten as 
\begin{align}\label{mm}
\frac{J_u(x)}{P_X(x)}-e^\epsilon\frac{J_u(x')}{P_X(x')}\leq \frac{e^\epsilon-1}{\epsilon}, \forall x,x',u.
\end{align}
Finally, we approximate $I(U;Y)$ and solve the optimizations under the privacy constraints \eqref{too} or \eqref{mm} similar to Lemma \ref{lem2} and Lemma \eqref{lem3}. We leave the remainder of this discussion for an extended journal version.
\subsection{Higher Order Approximation}\label{dis3}
In this section, we discuss how our method can be applied to larger permissible leakage intervals. In this paper, to approximate $I(U;Y)$ we used second order Taylor expansion of $\log(1+x)$ that equals to $x-\frac{x^2}{2}+o(x^2)$. One way to increase the permissible leakage interval is to use higher order approximation of $\log(1+x)$. This follows since if we use approximations with higher orders Taylor expansions, to achieve the same error (of approximation) we can use higher values of $x$. For instance, let $\log(1+x)=x-\frac{x^2}{2}+\text{error}_1(x)$ and $\log(1+x)=x-\frac{x^2}{2}+\frac{x^3}{3}+\text{error}_2(x)$. If for the first approximation we use the interval $x\leq c_1$, then we can use the interval $x\leq c_2$ where $c_2>c_1$, for the second approximation to have the same error, i.e., $|\text{error}_1(x)|=|\text{error}_2(x)|$. Furthermore, we obtain tighter bounds compared to the current results (closer to the optimal value); however, the optimization problems become more complex. In order to approximate $I(U;Y)$ we use third order Taylor expansion and we get
\begin{align*}
I(Y;U)&=\sum_u P_U(u)D(P_{Y|U=u}||P_Y)\\&=\sum_u P_U(u)\sum_y P_{Y|U=u}(y)\log\left(\frac{P_{Y|U=u}(y)}{P_Y(y)}\right)\\&=\sum_u P_U(u)\sum_y\! P_{Y|U=u}(y)\log\left(\!1\!+\!\epsilon\frac{P_{X|Y}^{-1}J_u(y)}{P_Y(y)}\right)\\&\stackrel{(a)}{=}\frac{1}{2}\epsilon^2\sum_u P_U\\&\times \left(\sum_y
\frac{(P_{X|Y}^{-1}J_u)^2}{P_Y}  -\frac{1}{6}\epsilon^3\frac{(P_{X|Y}^{-1}J_u)^3}{P_Y^2}\right)+o(\epsilon^3)\\
&=o(\epsilon^3)+\frac{1}{2}\epsilon^2\sum_u P_U\\&\times\left(\!\|[\sqrt{P_Y}^{-1}]P_{X|Y}^{-1}J_u\|^2\!-\!\frac{1}{3}\epsilon\|[\sqrt{P_Y}^{-1}]P_{X|Y}^{-1}J_u\|_3^3\!\right)\\&=o(\epsilon^3)+\sum_u P_U\left(\frac{1}{2}\epsilon^2\|WL_u\|^2-\frac{1}{6}\epsilon^3\|WL_u\|_3^2\right)
\\&\cong\sum_u P_U\left(\frac{1}{2}\epsilon^2\|WL_u\|^2-\frac{1}{6}\epsilon^3\|WL_u\|_3^2\right),
\end{align*}
where (a) follows by  
\begin{align*}
\log(1+\epsilon\frac{P_{X|Y}^{-1}J_u}{P_y})=&\epsilon\frac{P_{X|Y}^{-1}J_u}{P_y}-\frac{1}{2}\epsilon^2(\frac{P_{X|Y}^{-1}J_u}{P_y})^2\\&+\frac{1}{3}\epsilon^3(\frac{P_{X|Y}^{-1}J_u}{P_y})^3+o(\epsilon^3).
\end{align*}
and $\|\cdot\|_3$ corresponds to $\ell_3$-norm. Finally, we use $\sum_u P_U\left(\frac{1}{2}\epsilon^2\|WL_u\|^2-\frac{1}{6}\epsilon^3\|WL_u\|_3^2\right)$ in \eqref{lower} and \eqref{pp} instead of $\frac{1}{2}\epsilon^2\!\!\left(\sum_u \!\!P_U\|WL_u\|^2 \right)$
\subsection{Extension to a general leakage matrix $P_{X|Y}$}\label{dis2}
 In this part, we discuss how the proposed approach can be extended for any matrix $P_{X|Y}$. To do this, we can follow a similar method as in \cite{Khodam22}. Let $P_{X|Y}$ be a full row rank matrix with $|\mathcal{X}|<|\mathcal{Y}|$, and without loss of generality we assume that $P_{X|Y}$ can be represented by two submatrices where the first submatrix is invertible, i.e., $P_{X|Y}=[P_{X|Y_1} , P_{X|Y_2}]$ such that $P_{X|Y_1}$ defined on $\mathbb{R}^{|\mathcal{X}|\times|\mathcal{X}|}$ is invertible. Furthermore, let $M\in \mathbb{R}^{|\mathcal{X}|\times|\mathcal{Y}|}$ be constructed as follows:
 Let $V$ be the matrix of right eigenvectors of $P_{X|Y}$, i.e., $P_{X|Y}=U\Sigma V^T$ and $V=[v_1,\ v_2,\ ... ,\ v_{|\mathcal{Y}|}]$, then $M$ is defined as
 \begin{align*}
 M \triangleq \left[v_1,\ v_2,\ ... ,\ v_{|\mathcal{X}|}\right]^T.  
 \end{align*}  
 In \cite[Lemma 1]{Khodam22}, we presented two properties of $M$. Using those properties we get the next result.
 \begin{lemma}
 	 Let the Markov chain $X-Y-U$ holds and $J_u$ satisfies the three properties \eqref{prop1}, \eqref{prop2} and \eqref{prop3} or \eqref{jadid}. For sufficiently small $\epsilon>0$, for every $u\in\mathcal{U}$, the vector $P_{Y|U=u}$ belongs to the following convex polytope $\mathbb{S}_{u}$
 	\begin{align*}
 	\mathbb{S}_{u} = \left\{y\in\mathbb{R}^{|\mathcal{Y}|}|My=MP_Y+\epsilon M\begin{bmatrix}
 	P_{X|Y_1}^{-1}J_u\\0
 	\end{bmatrix},\ y\geq 0\right\},
 	\end{align*}
 	where $\begin{bmatrix}
 	P_{X|Y_1}^{-1}J_u\\0
 	\end{bmatrix}\in\mathbb{R}^{|\cal Y|}$ and $J_u$ satisfies \eqref{prop1}, \eqref{prop2}, and \eqref{prop3} or \eqref{jadid}.
 \end{lemma}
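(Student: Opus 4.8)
The plan is to verify directly, for each fixed $u$, that $y=P_{Y|U=u}$ satisfies the two defining conditions of $\mathbb{S}_u$: the affine equation $My=MP_Y+\epsilon M\left[\begin{smallmatrix}P_{X|Y_1}^{-1}J_u\\0\end{smallmatrix}\right]$ and the nonnegativity $y\ge 0$. Nonnegativity is immediate, since $P_{Y|U=u}$ is a conditional probability vector. The substance is in the affine equation, and the strategy is to exhibit the relevant discrepancy vector as an element of $\ker(P_{X|Y})$ and then invoke the property of $M$ established in \cite[Lemma 1]{Khodam22} that $M$ vanishes on this kernel.

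First I would write the Markov chain $X-Y-U$ in matrix form, $P_{X|U=u}=P_{X|Y}\,P_{Y|U=u}$, together with $P_X=P_{X|Y}\,P_Y$. Inserting the perturbation representation $P_{X|U=u}=P_X+\epsilon J_u$ — which is a legitimate probability vector in the stated $\epsilon$ range, since \eqref{prop1} forces it to sum to one while \eqref{prop3} (or, in the case of \eqref{jadid}, that constraint together with $\epsilon<1$) forces it to be nonnegative — and subtracting, one gets $P_{X|Y}\,(P_{Y|U=u}-P_Y)=\epsilon J_u$.

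Next I would use the block decomposition $P_{X|Y}=[P_{X|Y_1},P_{X|Y_2}]$ with $P_{X|Y_1}$ invertible: the $|\mathcal{Y}|\times|\mathcal{X}|$ matrix $\left[\begin{smallmatrix}P_{X|Y_1}^{-1}\\0\end{smallmatrix}\right]$ is a right inverse of $P_{X|Y}$, hence $P_{X|Y}\left[\begin{smallmatrix}P_{X|Y_1}^{-1}J_u\\0\end{smallmatrix}\right]=J_u$. Therefore the vector $z\triangleq(P_{Y|U=u}-P_Y)-\epsilon\left[\begin{smallmatrix}P_{X|Y_1}^{-1}J_u\\0\end{smallmatrix}\right]$ satisfies $P_{X|Y}z=\epsilon J_u-\epsilon J_u=0$, i.e.\ $z\in\ker(P_{X|Y})$. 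Since the rows of $M$ are the first $|\mathcal{X}|$ right singular vectors $v_1,\dots,v_{|\mathcal{X}|}$ of $P_{X|Y}$, they span its row space, so $\ker M=\ker P_{X|Y}$; consequently $Mz=0$, which rearranges exactly to $MP_{Y|U=u}=MP_Y+\epsilon M\left[\begin{smallmatrix}P_{X|Y_1}^{-1}J_u\\0\end{smallmatrix}\right]$. Together with $P_{Y|U=u}\ge 0$ this gives $P_{Y|U=u}\in\mathbb{S}_u$.

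The linear-algebraic core is short and is the natural analogue of the invertible-matrix computation used earlier, with the right inverse $\left[\begin{smallmatrix}P_{X|Y_1}^{-1}\\0\end{smallmatrix}\right]$ in place of $P_{X|Y}^{-1}$, so I do not expect a real obstacle. The points most easily glossed over are the bookkeeping ones: property \eqref{prop2} is what guarantees that a consistent joint law $P_{XYU}$ exists with $\sum_u P_U(u)P_{Y|U=u}=P_Y$, making the objects $P_{Y|U=u}$ meaningful in the first place; and the ``sufficiently small $\epsilon$'' hypothesis is exactly what keeps $P_X+\epsilon J_u$ a valid distribution — hence $\mathbb{S}_u$ nonempty — where one must keep track of whether \eqref{prop3} or \eqref{jadid} is the governing constraint.
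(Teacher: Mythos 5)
Your proposal is correct and follows essentially the same route as the paper, which simply defers to \cite[Lemma 2]{Khodam22}: you write the Markov chain in matrix form to get $P_{X|Y}(P_{Y|U=u}-P_Y)=\epsilon J_u$, use that $\left[\begin{smallmatrix}P_{X|Y_1}^{-1}\\ 0\end{smallmatrix}\right]$ is a right inverse of $P_{X|Y}$, and conclude via the property (this is exactly what \cite[Lemma 1]{Khodam22} supplies) that the rows of $M$ span the row space of $P_{X|Y}$ so that $M$ annihilates the kernel. Your added bookkeeping about \eqref{prop1}--\eqref{prop3} (or \eqref{jadid}) guaranteeing that $P_X+\epsilon J_u$ is a valid distribution, and about nonnegativity of $P_{Y|U=u}$, is accurate and fills in details the paper leaves implicit.
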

\begin{proof}
	The proof is similar to \cite[Lemma 2]{Khodam22}.
\end{proof}
Then, using the previous Lemma we have the following equivalency.
\begin{theorem}
	We have the following equivalency 
	\begin{align}\label{equi}
	\min_{\begin{array}{c} 
		\substack{P_{U|Y}:X-Y-U\\ -\epsilon\leq \log(\frac{P_{X|U}(x|u)}{P_{X}(x)})\leq \epsilon,\ \forall x,u}
		\end{array}}\! \! \! \!\!\!\!\!\!\!\!\!\!\!\!\!\!\!\!H(Y|U) =\!\!\!\!\!\!\!\!\! \min_{\begin{array}{c} 
		\substack{P_U,\ P_{Y|U=u}\in\mathbb{S}_u,\ \forall u\in\mathcal{U},\\ \sum_u P_U(u)P_{Y|U=u}=P_Y,\\ J_u \text{satisfies}\ \eqref{prop1},\ \eqref{prop2},\ \text{and}\ \eqref{prop3}}
		\end{array}} \!\!\!\!\!\!\!\!\!\!\!\!\!\!\!\!\!\!\!H(Y|U),
	\end{align}
	where $P_U$ defined on $\mathbb{R}^{|\mathcal{U}|}$ is the marginal distribution of $U$.
\end{theorem}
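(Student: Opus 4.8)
The plan is to establish the equivalency \eqref{equi} by a double-inequality argument, showing that the feasible set of the left-hand minimization and the feasible set of the right-hand minimization induce the same set of achievable values of $H(Y|U)$. The key observation is that $H(Y|U)$ depends on the mechanism only through the collection $\{(P_U(u), P_{Y|U=u})\}_{u\in\mathcal U}$, so it suffices to show that a pair $(P_U, \{P_{Y|U=u}\})$ is realizable by some $P_{U|Y}$ satisfying the Markov chain $X-Y-U$ and the bounded LIP constraint \emph{if and only if} it satisfies the constraints on the right-hand side of \eqref{equi}, namely $P_{Y|U=u}\in\mathbb S_u$ for all $u$, $\sum_u P_U(u)P_{Y|U=u}=P_Y$, and $J_u$ satisfies \eqref{prop1}, \eqref{prop2}, \eqref{prop3}.

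First I would handle the forward ($\geq$, or rather ``LHS feasible $\Rightarrow$ RHS feasible'') direction. Given any $P_{U|Y}$ with $X-Y-U$ and bounded LIP, the consistency condition $\sum_u P_U(u)P_{Y|U=u}=P_Y$ is automatic. The bounded LIP constraint lets us write $P_{X|U=u}=P_X+\epsilon J_u$ where $J_u$ satisfies the three properties \eqref{prop1}, \eqref{prop2}, \eqref{prop3}, exactly as in Section III. Then, because $P_{X|Y}=[P_{X|Y_1},P_{X|Y_2}]$ is full row rank with $P_{X|Y_1}$ invertible, the identity $P_{X|Y}\,P_{Y|U=u}=P_{X|U=u}$ together with the two properties of $M$ from \cite[Lemma 1]{Khodam22} forces $M P_{Y|U=u}=MP_Y+\epsilon M\begin{bmatrix}P_{X|Y_1}^{-1}J_u\\0\end{bmatrix}$; combined with $P_{Y|U=u}\geq 0$ this is precisely membership in $\mathbb S_u$, which is the content of the preceding Lemma. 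For the reverse direction, given a pair satisfying the RHS constraints, I would construct $P_{U|Y}(u|y)\triangleq P_U(u)P_{Y|U=u}(y)/P_Y(y)$; the consistency condition guarantees that $\sum_u P_{U|Y}(u|y)=1$ and that the induced marginal of $U$ is $P_U$, the Markov chain $X-Y-U$ holds by construction, and one checks that the induced $P_{X|U=u}=\sum_y P_{X|Y}(\cdot|y)P_{Y|U=u}(y)$ equals $P_X+\epsilon J_u$ using the defining equation of $\mathbb S_u$ and again the properties of $M$ (the kernel/range structure of $M$ ensures the perturbation coordinates are recovered correctly), so that \eqref{prop3} translates back into the bounded LIP inequality \eqref{local1}.

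The main obstacle I anticipate is the reverse direction: one must verify that the affine description via $M$ in $\mathbb S_u$ genuinely captures all and only the valid conditional distributions, i.e., that knowing $My$ and $y\geq0$ pins down enough of $P_{Y|U=u}$ for the induced $P_{X|U=u}$ to have the prescribed perturbation $\epsilon J_u$. This is where the two properties of $M$ established in \cite[Lemma 1]{Khodam22} do the real work — specifically that $M$'s rows span the same space relevant to $P_{X|Y}$ and that $M\begin{bmatrix}P_{X|Y_1}^{-1}J_u\\0\end{bmatrix}$ correctly encodes $J_u$ under left-multiplication by the leakage matrix. Since the analogous statement for the entropy/utility reformulation was carried out in \cite[Lemma 2]{Khodam22} and is invoked by the preceding Lemma, I expect the proof here to be ``the proof follows along the lines of \cite[Lemma 2]{Khodam22}, with $H(Y|U)$ in place of the objective there, noting that $H(Y|U)$ is invariant under the realization map $(P_U,\{P_{Y|U=u}\})\mapsto P_{U|Y}$.'' The only genuinely new point to check is that passing from the minimization of $H(Y|U)$ (equivalently maximization of $I(U;Y)=H(Y)-H(Y|U)$, matching \eqref{problem}) is consistent with the feasible-set equivalence, which is immediate since $H(Y)$ is a constant.
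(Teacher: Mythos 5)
Your proposal is correct and takes essentially the route the paper intends: the paper states this theorem with no written proof, relying on the preceding Lemma and the analogous development in \cite[Lemma 2]{Khodam22}, and your two-directional feasibility argument (forward via the Lemma, reverse via the realization $P_{U|Y}(u|y)=P_U(u)P_{Y|U=u}(y)/P_Y(y)$ together with the fact that the rows of $M$ span the row space of $P_{X|Y}$, so $My=Mz$ forces $P_{X|Y}y=P_{X|Y}z$ and hence $P_{X|U=u}=P_X+\epsilon J_u$) is exactly that argument made explicit. You also correctly note that $H(Y|U)$ depends only on $(P_U,\{P_{Y|U=u}\})$ and that \eqref{prop3} is equivalent to \eqref{local1} under the perturbation parametrization, which are the only points needed to turn feasible-set equality into equality of the two minima.
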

Next, it can be shown that how $H(Y|U)$ is minimized over $P_{Y|U=u}\in\mathbb{S}_u$ for all $u\in\mathcal{U}$.
\begin{proposition}\label{4}
	Let $P^*_{Y|U=u},\ \forall u\in\mathcal{U}$ be the minimizer of $H(Y|U)$ over the set $\{P_{Y|U=u}\in\mathbb{S}_u,\ \forall u\in\mathcal{U}|\sum_u P_U(u)P_{Y|U=u}=P_Y\}$, then 
	$P^*_{Y|U=u}\in\mathbb{S}_u$ for all $u\in\mathcal{U}$ must belong to extreme points of $\mathbb{S}_u$.
\end{proposition}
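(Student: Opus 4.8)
The plan is to exploit that, for fixed $P_U$, the objective $H(Y|U)=\sum_{u}P_U(u)H(P_{Y|U=u})$ is a \emph{concave} function of the tuple $(P_{Y|U=u})_{u\in\mathcal U}$ — it is a nonnegative combination of the (concave) Shannon entropies of the individual conditional distributions — and that it is being minimized over a polytope. First I would record that the feasible set
\[
\mathcal P\;=\;\Big\{(P_{Y|U=u})_{u}\;:\;P_{Y|U=u}\in\mathbb S_u\ \forall u,\ \textstyle\sum_u P_U(u)P_{Y|U=u}=P_Y\Big\}
\]
is a nonempty compact polytope: each $\mathbb S_u$ is the intersection of the affine set $\{y:My=MP_Y+\epsilon M[P_{X|Y_1}^{-1}J_u;0]\}$ with the nonnegative orthant and is therefore a polytope, the product $\prod_u\mathbb S_u$ is a polytope, and the linear mixture condition $\sum_u P_U(u)P_{Y|U=u}=P_Y$ cuts out an affine slice, so $\mathcal P$ is again a polytope; nonemptiness follows from the preceding lemma (the actual $P_{Y|U=u}$ lies in $\mathbb S_u$), and compactness holds since for small $\epsilon$ everything sits near $P_Y\gg 0$.

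Next I would invoke the standard fact that a concave function attains its minimum over a nonempty compact polytope at an extreme point, so there is a minimizer $(P^*_{Y|U=u})_u$ that is an extreme point of $\mathcal P$; it remains to show that each component $P^*_{Y|U=u}$ is then an extreme point of $\mathbb S_u$. The key enabling observation is that, by property \eqref{prop2}, $M\big(\sum_u P_U(u)q_u\big)=\sum_u P_U(u)\big(MP_Y+\epsilon M[P_{X|Y_1}^{-1}J_u;0]\big)=MP_Y$ holds automatically for \emph{any} choice of $q_u\in\mathbb S_u$, so the mixture constraint never conflicts with the $M$-defining equalities of the $\mathbb S_u$. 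Using this, if some $P^*_{Y|U=u_0}$ were not an extreme point of $\mathbb S_{u_0}$, I would write it as a proper convex combination $P^*_{Y|U=u_0}=\sum_i\mu_i e_i$ of extreme points $e_i$ of $\mathbb S_{u_0}$ (finitely many suffice by Carath\'eodory) and \emph{split} the symbol $u_0$ into symbols $u_0^{(i)}$ with $P_U(u_0^{(i)})=\mu_i P_U(u_0)$, $P_{Y|U=u_0^{(i)}}=e_i$, and $J_{u_0^{(i)}}=J_{u_0}$, so that $\mathbb S_{u_0^{(i)}}=\mathbb S_{u_0}$ and $e_i$ is one of its extreme points. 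This keeps $\sum_u P_U(u)P_{Y|U=u}=P_Y$, the Markov chain $X-Y-U$, and the perturbation properties \eqref{prop1}--\eqref{prop3} intact, and by Jensen's inequality $\sum_i\mu_i H(e_i)\le H\big(\sum_i\mu_i e_i\big)$, so $H(Y|U)$ does not increase; iterating over all symbols yields a minimizer all of whose conditional distributions are extreme points of the corresponding $\mathbb S_u$.

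I expect the main obstacle to be this last step: making the splitting rigorous while simultaneously keeping every constraint of \eqref{equi} consistent, and justifying that $|\mathcal U|$ may be enlarged without leaving the optimization domain (this is harmless here because $P_U$ is itself a free variable in \eqref{equi}). If instead one insists on keeping $|\mathcal U|$ fixed, the alternative — and more technical — route is to argue directly that an extreme point of $\mathcal P$ must have every block $q_u$ extreme in $\mathbb S_u$, which would rely on the $M$-identity above together with an active-constraint/rank count. The remaining ingredients — concavity of $H(Y|U)$ in the conditional distributions, the polytope structure of $\mathcal P$, and the extreme-point principle for concave minimization — are routine.
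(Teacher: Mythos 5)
The paper does not actually prove Proposition~\ref{4}: it is stated without argument and the surrounding text defers the details to \cite{Khodam22} and to ``an extended journal version,'' so there is no in-paper proof to compare against. Your argument is the standard one in this line of work (concave objective minimized over a compact polytope attains its minimum at an extreme point, plus output-symbol splitting), and it is essentially sound. Two places deserve tightening. First, the splitting step as you wrote it only gives ``does not increase'' via Jensen, which proves \emph{existence} of a minimizer supported on extreme points; to get the ``must'' in the statement you should invoke the \emph{strict} concavity of the Shannon entropy, so that writing $P^*_{Y|U=u_0}=\sum_i\mu_i e_i$ as a proper convex combination of distinct points strictly decreases $H(Y|U)$ and contradicts optimality. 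Your instinct that the splitting route is the right one is correct: for \emph{fixed} $P_U$ and fixed $|\mathcal U|$ the claim can actually fail (an extreme point of the coupled polytope $\mathcal P$ need not be blockwise extreme once the mixture constraint $\sum_u P_U(u)P_{Y|U=u}=P_Y$ is imposed), so the proposition is only correct because $P_U$ and the alphabet $\mathcal U$ are free variables in \eqref{equi} --- this caveat is worth stating explicitly rather than as an afterthought. Second, compactness of $\mathbb S_u$ should not be waved through with ``everything sits near $P_Y$'': boundedness follows from the properties of $M$ in \cite[Lemma~1]{Khodam22} (which guarantee that every $y\in\mathbb S_u$ is a probability vector, hence lies in the simplex), and that is the clean justification. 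With those two repairs your proof is complete and is, as far as one can tell, the argument the authors intend.
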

Following the same approach in \cite{Khodam22} we can find the extreme points and approximate the right hand side in \eqref{equi}. We leave the remainder of this discussion for an extended journal version.
\section{Numerical Example}
In this section, we present numerical examples to evaluate the proposed approaches and compare the results with \cite{khodam}, where have used the \emph{strong $\chi^2$-privacy criterion} as the privacy constraint. To do so, we use \cite[Example 1]{khodam}.
\begin{example}\cite[Example 1]{khodam}
	Let the leakage matrix be $P_{X|Y}=\begin{bmatrix}
	\frac{1}{4}       & \frac{2}{5}  \\
	\frac{3}{4}    & \frac{3}{5}
	\end{bmatrix}$
	and $P_Y$ be given as $[\frac{1}{4} , \frac{3}{4}]^T$. Thus, we find $W$ and $P_X$ as
	\begin{align*}
	P_X&=P_{X|Y}P_Y=[0.3625, 0.6375]^T,\\
	W &= [\sqrt{P_Y}^{-1}]P_{X|Y}^{-1}[\sqrt{P_X}] = \begin{bmatrix}
	-4.8166       & 4.2583  \\
	3.4761    & -1.5366
	\end{bmatrix}.
	\end{align*}
	The singular values of $W$ are $7.4012$ and $1$ with corresponding right singular vectors $[0.7984, -0.6021]^T$ and $[0.6021 , 0.7984]^T$, respectively. Here, $L^*$ equals to $[0.7984, -0.6021]^T$ and we assume that $\epsilon<0.05$.
	Following the first approach, using the optimal vectors $\frac{L^*}{\gamma_1}$ and $-\frac{L^*}{\gamma_2}$, and the privacy constraint, we have
	\begin{align*}
	-\frac{0.6021}{1+\epsilon}\leq \frac{0.7984}{\gamma_1}\leq 0.6021,\\
	-\frac{7984}{1+\epsilon}\leq -\frac{0.6021}{\gamma_1}\leq 0.7984,\\
	-\frac{0.6021}{1+\epsilon}\leq -\frac{0.7984}{\gamma_2}\leq 0.6021,\\
	-\frac{7984}{1+\epsilon}\leq \frac{0.6021}{\gamma_2}\leq 0.7984,
	\end{align*}
	resulting
	\begin{align*}
	\gamma_1&\geq \max\{1.326, \frac{1+\epsilon}{1.326}\}=1.326,\\
	\gamma_2&\geq \max\{1.326(1+\epsilon),\frac{1}{1.326}\}=1.326(1+\epsilon).
	\end{align*}
	Hence, we choose $\gamma_1=1.326$ and $\gamma_2=1.326(1+\epsilon)$. Using Lemma \eqref{lem2}, we have
	\begin{align}
	P_1=\frac{1}{2}\epsilon^2(7.4012)^2\frac{1}{1.326^2(1+\epsilon)}=15.5771\frac{\epsilon^2}{1+\epsilon}.
	\end{align}
	Following the second approach, using the optimal vectors $\lambda_1L^*$ and $-\lambda_2L^*$, and the privacy constraint, we have
	\begin{align*}
	0.6021\frac{e^{-\epsilon}-1}{\epsilon}\leq 0.7984\lambda_1\leq 0.6021\frac{e^\epsilon-1}{\epsilon},\\
	0.7984\frac{e^{-\epsilon}-1}{\epsilon}\leq -0.6021\lambda_1\leq 0.7984\frac{e^\epsilon-1}{\epsilon},\\
0.6021\frac{e^{-\epsilon}-1}{\epsilon}\leq -0.7984\lambda_2\leq 0.6021\frac{e^\epsilon-1}{\epsilon},\\
	0.7984\frac{e^{-\epsilon}-1}{\epsilon}\leq 0.6021\lambda_2\leq 0.7984\frac{e^\epsilon-1}{\epsilon},
	\end{align*}
	leading to
	\begin{align*}
	\lambda_1&\leq \min\{1.326\frac{1-e^{-\epsilon}}{\epsilon},\frac{1}{1.326}\frac{e^\epsilon-1}{\epsilon}\}\\&=\frac{1}{1.326}\frac{e^\epsilon-1}{\epsilon}\\
	\lambda_2&\leq\min\{1.326\frac{e^{\epsilon}-1}{\epsilon},\frac{1}{1.326}\frac{1-e^{-\epsilon}}{\epsilon}\}\\&=\frac{1}{1.326}\frac{1-e^{-\epsilon}}{\epsilon}.
\end{align*}
We choose $\lambda_1=\frac{1}{1.326}\frac{e^\epsilon-1}{\epsilon}$ and $\lambda_2=\frac{1}{1.326}\frac{1-e^{-\epsilon}}{\epsilon}$. Using Lemma \ref{lem3}, we obtain
\begin{align}
P_2 &= \frac{1}{2}\epsilon^2(7.4012)^2\frac{(e^\epsilon-1)(1-e^{-\epsilon})}{(1.326)^2\epsilon^2}\\&=15.5771(e^{\epsilon}+e^{-\epsilon}-2).
\end{align}
By replacing the bounded LIP with the strong $\chi^2$-privacy criterion and using the method in \cite{khodam}, we find the approximate maximum utility as $\frac{1}{2}\epsilon^2(7.4012)^2=27.39\cdot \epsilon^2$.
In Fig. \ref{geo4}, we compare $P_1$, $P_2$, the approximate value using the strong $\chi^2$-privacy criterion instead of \eqref{local1} and the optimal solution of \eqref{problem} using exhaustive search. We recall that $P_1$ is the approximate of lower bound on \eqref{problem} and $P_2$ is the approximate value of \eqref{problem}. We can see that $P_2$ dominates $P_1$, and the gap between exact value of \eqref{problem} which is found by exhaustive search and $P_2$ is small in this privacy regime. Finally, the blue curve shows the approximate solution of \eqref{problem}, replacing \eqref{local1} by the strong $\chi^2$-criterion \cite{khodam}. Intuitively, the blue curve dominates \eqref{problem}, since the bounded LIP must hold for all $x$ and $u$, however, the strong $\chi^2$-criterion must hold for each $u$. In other words, bounded LIP is a point-wise measure with respect to $x$ and $u$, and the strong $\chi^2$-criterion is point-wise with respect to $u$, which means that the bounded LIP is a stronger measure compared to the strong $\chi^2$-criterion. Hence, approximate utility of \cite{khodam} is greater than the approximate value of \eqref{problem}. Finally, the mapping between $U$ and $Y$ can be calculated as follows. For the first approach we have
\begin{align*}
P_{Y|U=0}&=P_Y+\epsilon P_{X|Y}^{-1}[\sqrt{P_X}]\frac{L^*}{\gamma_1}\\
&=[0.25-2.4169\cdot\epsilon , 0.75+2.4169\cdot\epsilon]^T,\\
P_{Y|U=1}&=P_Y-\epsilon P_{X|Y}^{-1}[\sqrt{P_X}]\frac{L^*}{\gamma_2}\\
&=[0.25+2.4169\cdot\frac{\epsilon}{1+\epsilon} , 0.75-2.4169\cdot\frac{\epsilon}{1+\epsilon}]^T.
\end{align*}    
And following the second approach we obtain
\begin{align*}
P_{Y|U=0}&=P_Y+\epsilon P_{X|Y}^{-1}[\sqrt{P_X}]\lambda_1L^*\\
&\!\!\!\!\!\!\!\!\!\!\!\!=[0.25-2.4169\cdot(e^\epsilon-1) , 0.75+2.4169\cdot(e^\epsilon-1)]^T,\\
P_{Y|U=1}&=P_Y-\epsilon P_{X|Y}^{-1}[\sqrt{P_X}]\lambda_2L^*\\
&\!\!\!\!\!\!\!\!\!\!\!\!=[0.25+2.4169\cdot(1-e^\epsilon) , 0.75-2.4169\cdot(1-e^\epsilon)]^T.
\end{align*}
For both approaches we can verify that $P_U^1P_{Y|U=0}+P_U^2P_{Y|U=1}=P_Y$.
\begin{figure}[]
	\centering
	\includegraphics[width = 0.48\textwidth]{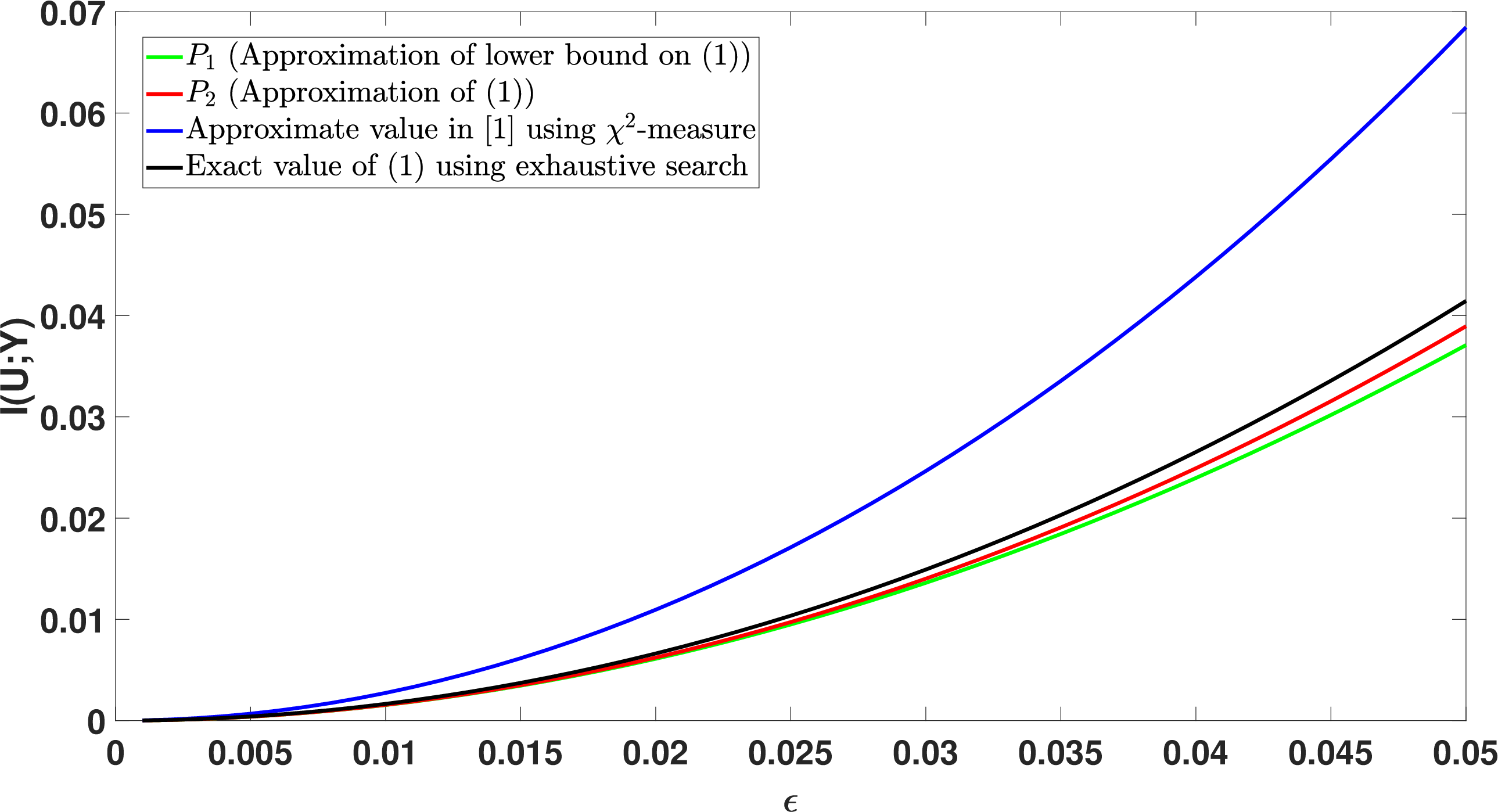}
	\caption{Comparing the proposed methods in this paper with the optimal solution and previous method in \cite{khodam}. It can be seen that in the high privacy regimes, $P_2$ is close to the exact solution which is found by exhaustive search and $P_2$ dominates $P_1$.}
	\label{geo4}
\end{figure}
\end{example}
\section{conclusion}\label{concul}
We have shown that information geometry can be used to simplify an information-theoretic privacy mechanism design problem with bounded LIP as the leakage constraint. When a small $\epsilon$ privacy leakage is allowed, simple approximate solutions are derived. A geometrical interpretation of the privacy mechanism design is provided. Specifically, we look for vectors satisfying the privacy constraints of having the largest Euclidean norm, leading to finding the largest principle singular value and vector of a matrix. The proposed approach establishes a useful and general design framework, which has been demonstrated in other privacy design problems such as considering maxlift or LDP instead of LIP. 
\section*{Appendix A}
\subsection*{Proof of Lemma \ref{lem2}:}
To derive the upper bound we have
\begin{align*}
\frac{1}{2}\epsilon^2\!\!\left(\sum_u \!\!P_U\|WL_u\|^2\right)&\leq \frac{1}{2}\epsilon^2 \left(\sum_u \!\!P_U\sigma_{\text{max}}^2\|L_u\|^2\right)\\&\stackrel{(a)}{\leq} \frac{1}{2}\epsilon^2\sigma_{\text{max}}^2\left(\sum_u \!\!P_U\right)=\frac{1}{2}\epsilon^2\sigma_{\text{max}}^2,
\end{align*}
where (a) follows by the privacy constraint, i.e., we have
\begin{align*}
\frac{-\sqrt{P_X}}{1+\epsilon}\leq L\leq \sqrt{P_X} \Rightarrow \|L\|^2\leq \max\{1,\frac{1}{(1+\epsilon)^2}\}=1.
\end{align*}
To derive the first lower bound, let us divide $L^*$ and $-L^*$ by $\gamma_1$ and $\gamma_2$ so that the privacy constraint is met. Let $U$ be a binary RV with probability masses $P_U^1=\frac{\gamma_1}{\gamma_1+\gamma_2}$ and $P_U^2=\frac{\gamma_2}{\gamma_1+\gamma_2}$. Furthermore, let $L_1=\frac{L^*}{\gamma_1}$ and $L_2=\frac{-L^*}{\gamma_2}$. Clearly, $L_1\perp \sqrt{P_X}$ and $L_2\perp \sqrt{P_X}$ and $P_U^1L_1+P_U^2L^2=0$. Moreover, we have
\begin{align*}
&\frac{1}{2}\epsilon^2\!\!\left(\sum_u \!\!P_U\|WL_u\|^2\right)=\\&\frac{1}{2}\epsilon^2\sigma_{\text{max}}^2\left( \frac{1}{\gamma_1(\gamma_1+\gamma_2)}+ \frac{1}{\gamma_2(\gamma_1+\gamma_2)}\right)=\frac{1}{2}\epsilon^2\frac{\sigma_{\text{max}}^2}{\gamma_1\gamma_2}.
\end{align*}
To derive the last lower bound let us divide $L^*$ and $-L^*$ by $\gamma_{\text{max}}$. In this case, since we divide $L^*$ and $-L^*$ by same factor we have $\gamma_{\text{max}}\geq \max\{\gamma_1,\gamma_2\}$.  Then, let $U$ be a binary RV with uniform distribution and $L_1=\frac{L^*}{\gamma_{\text{max}}}$ and $L_2=\frac{-L^*}{\gamma_{\text{max}}}$. Clearly, \eqref{c2} and \eqref{c3} are satisfied and 
\begin{align*}
\frac{1}{2}\epsilon^2\!\!\left(\sum_u \!\!P_U\|WL_u\|^2\right)=\frac{1}{2}\epsilon^2\frac{\sigma_{\text{max}}^2}{\gamma_{\max}^2}.
\end{align*} 
By using $\gamma_{\text{max}}\geq \max\{\gamma_1,\gamma_2\}$ we obtain the second lower bound.
Finally, to prove \eqref{cs}, we note that the only feasible directions are $L^*$ and $-L^*$. The latter follows since the singular vectors of $P_{X|Y}$ are $L^*$ and $\sqrt{P_X}$, which span the 2-D space. Hence, it is sufficient to consider binary $U$ and we have
\begin{align}\label{t}
\frac{1}{2}\epsilon^2\!\!\left(\sum_u \!\!P_U\|WL_u\|^2\right) \leq  \frac{1}{2}\epsilon^2\sigma_{\text{max}}^2\left(\frac{P_{U}^1}{\gamma_1^2}+\frac{P_{U}^2}{\gamma_2^2}\right)
\end{align}
where, $\gamma_1\geq1$ and $\gamma_2\geq1$ are the smallest possible factors ensuring that $L^*$ and $-L^*$ satisfy the privacy constraint. Furthermore, using the constraint 
\begin{align}
P_{U}^1L_1+P_{U}^2L_2=P_{U}^1\frac{L^*}{\gamma_1}-P_{U}^2\frac{L^*}{\gamma_2}=0
\end{align} 
and $P_{U}^1+P_{U}^2=1$,
we obtain
\begin{align}\label{tt}
P_{U}^1=\frac{\gamma_1}{\gamma_1+\gamma_2},\ P_{U}^2=\frac{\gamma_2}{\gamma_1+\gamma_2}. 
\end{align}
Combining \eqref{t} with \eqref{tt}, we have
\begin{align}\label{ttt}
\frac{1}{2}\epsilon^2\!\!\left(\sum_u \!\!P_U\|WL_u\|^2\right) \leq  \frac{1}{2}\epsilon^2\frac{\sigma_{\text{max}}^2}{\gamma_1\gamma_2}.
\end{align}
To attain the upper bound \eqref{ttt}, we let $P_{U}^1=\frac{\gamma_1}{\gamma_1+\gamma_2},\ P_{U}^2=\frac{\gamma_2}{\gamma_1+\gamma_2}$ and $L_1=\frac{L^*}{\gamma_1}$, $L_1=-\frac{L^*}{\gamma_2}$. We emphasize that the utilities in this Lemma are optimized over $\gamma_1\geq1$ and $\gamma_2\geq1$.
\subsection*{Proof of Lemma \ref{lem3}:}
To obtain the upper bound we have
\begin{align*}
\frac{1}{2}\epsilon^2\!\!\left(\sum_u \!\!P_U\|WL_u\|^2\right)&\leq \frac{1}{2}\epsilon^2 \left(\sum_u \!\!P_U\sigma_{\text{max}}^2\|L_u\|^2\right)\\&\stackrel{(a)}{\leq} \frac{1}{2}\epsilon^2\sigma_{\text{max}}^2\left(\sum_u \!\!\frac{e^{\epsilon}-1}{\epsilon}\right)^2\\&=\frac{1}{2}\sigma_{\text{max}}^2\left(e^{\epsilon}-1\right)^2,
\end{align*}
where (a) follows by the privacy constraint, i.e., \eqref{c3} implies
\begin{align*}
\|L\|^2&\leq \max\{\left(\frac{e^{-\epsilon}-1}{\epsilon}\right)^2,\left(\frac{e^{\epsilon}-1}{\epsilon}\right)^2\}\\&=\left(\frac{e^{\epsilon}-1}{\epsilon}\right)^2.
\end{align*}
To derive the lower bounds, let us scale $L^*$ and $-L^*$ by $\lambda_1$ and $\lambda_2$ so that the privacy constraint in \eqref{c3} is met. Let $U$ be a binary RV with weights $P_U^1=\frac{\lambda_2}{\lambda_1+\lambda_2}$ and $P_U^2=\frac{\lambda_1}{\lambda_1+\lambda_2}$. Furthermore, let $L_1=\lambda_1L^*$ and $L_2=-\lambda_1L^*$. Clearly, $L_1\perp \sqrt{P_X}$ and $L_2\perp \sqrt{P_X}$ and $P_U^1L_1+P_U^2L^2=0$. Moreover, we have
\begin{align*}
\frac{1}{2}\epsilon^2\!\!\left(\sum_u \!\!P_U\|WL_u\|^2\right)&=\frac{1}{2}\epsilon^2\sigma_{\text{max}}^2\left( \frac{\lambda_2\lambda_1^2+\lambda_1\lambda_2^2}{\lambda_1+\lambda_2}\right)\\&= \frac{1}{2}\epsilon^2\sigma_{\text{max}}^2\lambda_1\lambda_2\geq \frac{1}{2}\epsilon^2\sigma_{\text{max}}^2(\lambda')^2,
\end{align*}
where the last line follows by $\lambda'\leq \min\{\lambda_1,\lambda_2\}$. To attain the last inequality we let $P_U^1=P_U^2=\frac{1}{2}$ and $L_1=\lambda'L^*$, $L_2=-\lambda'L^*$. Similar to Lemma \ref{lem2}, when $\mathcal{K}=2$ the only feasible directions are $L^*$ and $-L^*$. Using the constraint $P_{U}^1L_1+P_{U}^2L_2=0$ with $L_1=\lambda_1L^*$ and $L_2=-\lambda_1L^*$, we obtain
\begin{align}\label{tth}
P_{U}^1=\frac{\lambda_2}{\lambda_1+\lambda_2},\ P_{U}^2=\frac{\lambda_1}{\lambda_1+\lambda_2}. 
\end{align} 
Thus, we have
\begin{align}
P_2=\frac{1}{2}\epsilon^2\sigma_{\text{max}}^2\lambda_1\lambda_2.
\end{align}
\clearpage   
\bibliographystyle{IEEEtran}
{\balance \bibliography{IEEEabrv,IZS}}
\end{document}